
\documentclass[aip,jmp,reprint]{revtex4-2}

\usepackage{amsmath,amssymb,amsfonts,amsthm}
\usepackage{bm}
\usepackage{color}
\usepackage{epsfig}
\usepackage{graphicx}

\usepackage[utf8]{inputenc}
\usepackage[T1]{fontenc}
\usepackage{mathptmx}


\newtheorem{theorem}{Theorem}[section]

\newtheorem{lemma}[theorem]{Lemma}
\newtheorem{proposition}[theorem]{Proposition}

\setlength{\parskip}{4pt}

\numberwithin{equation}{section}




\newcommand{\CC}{{\mathbb C}}
\newcommand{\RR}{{\mathbb R}}

\newcommand{\NN}{{\mathbb N}}


\newcommand{\cF}{{\mathcal{F}}}
\newcommand{\cH}{{\mathcal{H}}}

\newcommand{\cR}{{\mathcal{R}}}
\newcommand{\cS}{{\mathcal{S}}}





\newcommand{\bp}{\bm{p}}

\newcommand{\bx}{\bm{x}}

\newcommand{\bO}{\bm{O}}

\newcommand{\bP}{\bm{P}}
\newcommand{\bQ}{\bm{Q}}

\newcommand{\fA}{{\mathfrak A}}

\newcommand{\fR}{{\mathfrak R}}

\newcommand{\bpartial}{\bm{\partial}}



\def\ie{{\it i.e.\ }}
\def\viz{{\it viz.\ }}
\def\etc{{\it etc}}

\newcommand{\be}{\begin{equation}}
\newcommand{\ee}{\end{equation}}


\begin{document}

\title{\Large Proper condensates} 



\author{Detlev Buchholz}
\email[]{detlev.buchholz@mathematik.uni-goettingen.de}
\affiliation{Mathematisches Institut, Universit\"at G\"ottingen,
  B\"urgerstraße 40, 37073 G\"ottingen - Germany}



\date{9 September 2021}

\begin{abstract} \noindent
  In this article a
  novel characterization of Bose-Einstein condensates is proposed.
  Instead of relying on occupation numbers of a few dominant modes,
  which become macroscopic in the limit of
  infinite particle numbers, 
  it focuses on the regular excitations whose numbers stay
  bounded in this limit. In this manner, subspaces of global, 
  respectively local regular wave functions are identified.
  Their orthogonal complements determine the wave functions
  of particles forming proper (infinite) condensates in the limit.
  In contrast to the concept
  of macroscopic occupation numbers, which does not sharply fix
  the wave functions of condensates in the limit states,
  the notion of proper condensates is unambiguously defined. 
  It is outlined, how this concept can be used in the
  analysis of condensates in models. The method is illustrated
  by the example of trapped non-interacting ground states and
  their multifarious thermodynamic limits, differing by
  the structure of condensates accompanying the Fock vacuum. 
  The concept of proper condensates is also compared
  with the Onsager-Penrose criterion, based on the analysis of 
  eigenvalues of one-particle density matrices. It is
  shown that the concept of regular wave functions is 
  useful there as well for the identification of 
  wave functions forming proper condensates. 
  \end{abstract}


\maketitle


%
\vspace*{-4mm}
\section{Introduction}
\setcounter{equation}{0}  

\noindent We propose in this article a novel characterization of
Bose-Einstein condensates. It differs from various previous
approaches, based on concepts such as the
spectrum of one-particle
density matrices in approximating states, the 
spontaneous breakdown of gauge symmetries in the
thermodynamic limit, or the appearance of long range order, 
cf.\ for example \cite{LiSeSoYn, Ve}.
The latter concepts focus on global properties of the
underlying systems. On the other hand, 
condensates are prepared in realistic experiments
in bounded regions with a limited number of particles.

\medskip 
The primary problem appearing in the formulation of
a corresponding \textit{local} criterion for condensation 
is due to the fact that the states of interest, 
having a finite particle number, can be  
described in the Fock representation (they are \textit{locally normal}).
So a sharp criterion that characterizes condensates
locally seems to be out of reach. In order to overcome this
difficulty, we consider the idealization of
an unlimited number of particles occupying 
regions, which are suitably adjusted to the
particle numbers. If the resulting 
states separate in the limit in an unambiguous manner into
an infinite number of particles, occupying a few 
states, and a regular component
consisting of finite numbers of particles occupying each of the remaining
states, we speak of \textit{proper condensation}. As we shall see,
the determination of the regular components is
crucial for the identification of the 
wave functions of the condensate. The 
idealizations underlying our approach are out of experimental reach, but
they allow it on the theoretical side to
determine in a clear-cut manner the wave functions
of particles forming condensates.
Once this has been accomplished, one can 
return to the states with a large, but finite particle number
and study the onset of proper condensation, depending on
data such as the temperature, the shape of trapping
potentials, \etc. 

\medskip
We are interested in states containing a finite number
of particles in $s$-dimensio\-nal space $\RR^s$, which are
confined by a trapping potential. Since we need to proceed 
to limits involving an infinite number of particles, including
the passage to appropriate thermodynamic limits, it is 
convenient to describe the states by positive linear
functionals on a specific algebra of bounded operators,
the resolvent algebra $\fR(\RR^s)$ 
introduced in \cite{BuGr}. It is superior to the Weyl algebra
since contributions due to
infinite accumulations of particles are effectively
suppressed, whereas on the Weyl algebra they lead to
singular states which often defy a meaningful physical interpretation.

\medskip
The resolvent algebra contains a 
subalgebra $\fA(\RR^s)$ of observables which 
do not change particle numbers \cite{Bu}.
Observables which are localized in open bounded or unbounded regions
$\bO \subset \RR^s$ are described by subalgebras
$\fA(\bO) \subset \fA(\RR^s)$.
What matters here is the fact that these algebras
contain the
resolvents of corresponding particle number operators,
\mbox{$\mu \mapsto (\mu 1 + a^*(f) a(f))^{-1}$}, for
$f \in L^2(\bO)$ and $\mu > 0$, where
$a^*(f), a(f)$ are creation and annihilation
operators. Our criterion, characterizing
states containing proper condensates, is based on these
operators. 

\medskip
In the subsequent section we define the notion of proper
condensates, exhibit some of its features, and indicate
how it can be used in applications. In Sect.\ 3 we 
illustrate the method by an analysis of the thermodynamic
limits of non-interacting bosons that occupy
the ground states in regular 
trapping potentials. Depending on how this limit
is reached, the resulting Fock vacuum is accompanied by 
different arrangements of condensates. Sect.\ 4 contains a 
study of the relation between the notion of
proper condensates and the concept of
macroscopic occupation numbers, invented by
Onsager and Penrose \cite{OnPe}, which is based on the
analysis of one-particle density matrices. Our article concludes with
an outlook on further applications of our framework. 
In an appendix some general properties of the 
occupation numbers of one-particle density matrices
are exhibited.

\section{Identification of proper condensates}
\setcounter{equation}{0}

\noindent
We consider arbitrary states on the algebra of observable $\fA(\RR^s)$, 
\ie \ positive, linear and normalized
functional $\omega: \fA(\RR^s) \rightarrow \CC$.
Let us recall that any such state gives rise
by the GNS-construction to a representation, where the
state is represented by some unit vector in a Hilbert space
and the elements of the algebra by concrete bounded operators
acting on this space \cite{Ha}. The restrictions of the functional
to the local algebras, $\omega \upharpoonright \fA(\bO)$,
sometimes called partial states, contain the information which
one obtains about $\omega$ by observations
in a given region $\bO \subset \RR^s$.

\medskip
Our criterion, characterizing states containing a proper
condensate, deals with primary states. These
are states where the weak closures of the algebra of observables 
in the respective GNS-representations have a trivial
center. In applications, interesting examples
are pure states and pure phases in case of thermal systems.

\medskip \noindent
\textbf{Definition:} \ Let $\bO \subset \RR^s$ be any
(bounded or unbounded) region. A primary state $\omega$ on
$\fA(\RR^s)$ contains a \textit{proper condensate} in $\bO$ if there 
exists some function $f \in L^2(\bO)$ such that
\be \label{e.1} 
\omega((\mu 1 + a^*(f) a(f))^{-1}) = 0 \, , \quad \mu > 0 \, .
\ee

\medskip \noindent
\textbf{Remark:} \
All $C_0$-functions, \ie \ continuous functions
vanishing at infinity, of $a^*(f) a(f)$ then vanish in the
GNS representation induced by $\omega$, so the single particle
state $f$ is infinitely occupied in all states of the
corresponding GNS-representation; cf.\ also the subsequent
proposition.

\medskip
Mixed states, 
having primary components in their central decomposition   
that contain proper condensates, can be characterized as follows. 
\begin{proposition} \label{p.2.1}
  Let $\omega$ be a state on $\fA(\RR^s)$. 
  Its central decomposition contains a non-negligible set
  of primary states with a proper condensate in $\bO$ 
  iff there is some $f \in L^2(\bO)$ such that
     \be \label{e.2}
   \limsup_{\mu \rightarrow \infty}  \, 
   \mu \ \omega((\mu 1 + a^*(f) a(f))^{-1}) < 1 \, .
  \ee
  There is a central projection $Z$ in the 
  weak closure of $\fA(\RR^s)$ in the corresponding
  GNS-represen\-tation such that $\omega(Z)$ 
  indicates the fraction of proper condensate
  in~$\bO$.
\end{proposition}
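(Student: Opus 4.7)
The plan is to translate the $C^*$-algebraic condition \eqref{e.2} into an operator-theoretic statement in the GNS representation $\pi_\omega$ and then to exploit the central decomposition. Writing $R_\mu := \pi_\omega\bigl((\mu 1 + a^*(f) a(f))^{-1}\bigr)$ and $\cM := \pi_\omega(\fA(\RR^s))''$, the resolvent identity $R_\mu - R_\nu = (\nu - \mu) R_\mu R_\nu$ (valid abstractly in $\fR(\RR^s)$ and hence after applying $\pi_\omega$) shows that $\mu \mapsto \mu R_\mu$ is an increasing family of mutually commuting positive contractions. Hence the strong limit $E_f := \lim_{\mu \to \infty}\mu R_\mu$ exists in $\cM$; rewriting the resolvent identity as $\mu\mu' R_\mu R_{\mu'} = \mu\mu'(R_\mu - R_{\mu'})/(\mu' - \mu)$ and letting $\mu' \to \infty$ gives $(\mu R_\mu) E_f = \mu R_\mu$, and a second limit in $\mu$ gives $E_f^2 = E_f$. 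Setting $P_f := 1 - E_f$ produces an orthogonal projection in $\cM$ onto the common kernel $\bigcap_{\mu > 0}\ker R_\mu$.

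From here the first half of the statement is a monotone-convergence computation. One has $\mu\,\omega(R_\mu) \nearrow \omega(E_f) = 1 - \omega(P_f)$, so the $\limsup$ in \eqref{e.2} is actually an increasing limit and \eqref{e.2} is equivalent to $\omega(P_f) > 0$. Specialising to a primary state, the defining condition \eqref{e.1} of a proper condensate reduces to $\omega(P_f) = 1$, i.e.\ $\Omega_\omega \in \mathrm{range}(P_f)$. Next I disintegrate $\omega = \int_X \omega_x\,d\nu(x)$ along the centre of $\cM$: every $\omega_x$ is primary and, since $P_f \in \cM$, one has $P_f = \int^{\oplus} P_f^{(x)}\,d\nu(x)$ with $P_f^{(x)}$ the analogous kernel projection built in the fibre $\pi_{\omega_x}$. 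I would then take $Z$ to be the central support of $P_f$ in $\cM$, or, for the aggregate fraction attached to the region $\bO$, the join $\bigvee_{f \in L^2(\bO)} c_\cM(P_f)$: this central projection has fibre values $Z^{(x)} = 0$ or $1$ according as $P_f^{(x)}$ vanishes or not, so that $\omega(Z) = \nu\bigl(\{x : P_f^{(x)} \neq 0\}\bigr)$, strictly positive precisely when $\omega(P_f) > 0$, and hence precisely when \eqref{e.2} holds.

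The main obstacle I anticipate is identifying the measurable set $\{x : P_f^{(x)} \neq 0\}$ with the set of primary components that actually carry a proper condensate in $\bO$, so that $\omega(Z)$ earns the interpretation of a fraction of proper condensate. One inclusion is automatic: if $\omega_x$ has a proper condensate witnessed by some $f' \in L^2(\bO)$ then $P_{f'}^{(x)} \neq 0$, which feeds into the aggregate central support $Z$. The delicate converse requires, on a fibre where $P_f^{(x)} \neq 0$ yet possibly $\omega_x(P_f^{(x)}) < 1$, the production of a (possibly different) $f'(x) \in L^2(\bO)$ with $\omega_x(P_{f'(x)}^{(x)}) = 1$; I would attempt this by a measurable selection over a countable dense subset of $L^2(\bO)$, combining the norm continuity of $f \mapsto R_\mu^{(f)}$ in the resolvent algebra with the fact that on the range of $P_f^{(x)}$ the mode $f$ is infinitely occupied, which should allow a fibrewise refinement to a concrete condensate wave function.
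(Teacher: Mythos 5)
The central difficulty you flag at the end is real, but the paper removes it with an idea that is absent from your proposal: the operators $A_\varepsilon=\mu R_\mu$ (with $\mu=\varepsilon^{-1}$) form a \emph{central sequence} of the resolvent algebra. Concretely, writing $R(\lambda,g)=(i\lambda 1+\phi(g))^{-1}$ for the generating resolvents, one has $[R(\lambda,g),A_\varepsilon]=R(\lambda,g)[\phi(P_fg),A_\varepsilon]R(\lambda,g)$ and the intertwining relations between $a^{(*)}(f)$ and functions of $a^*(f)a(f)$ give $\|[\phi(f),A_\varepsilon]\|\le(2\varepsilon)^{1/2}$, hence $\|[R(\lambda,g),A_\varepsilon]\|\le(2\varepsilon)^{1/2}\lambda^{-2}\|g\|\to 0$. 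Consequently your $E_f=\lim_\mu \mu R_\mu$ lies not merely in $\cM$ but in its \emph{center}, so in every factorial fibre of the central decomposition it is $0$ or $1$; the intermediate case $0<\omega_x(P_f^{(x)})<1$ that your last paragraph tries to repair simply cannot occur, the same fixed $f$ witnesses condition \eqref{e.1} in every fibre where $E_f^{(x)}=0$ (monotonicity forces all $A_\varepsilon$ to vanish there, and Stone--Weierstra{\ss} then kills all $C_0$-functions of $a^*(f)a(f)$), and $Z=1-E_f$ is itself the desired central projection rather than a central support or a join over all of $L^2(\bO)$.

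Without centrality your patch does not close the gap: in a factor, $P_f^{(x)}\neq 0$ only says that \emph{some} vectors of the fibre representation have the mode $f$ infinitely occupied, and there is no reason why the fibre state $\omega_x$ itself should satisfy \eqref{e.1} for some other $f'\in L^2(\bO)$; a measurable selection over a countable dense set cannot manufacture such an $f'$ when none need exist a priori. So you should insert the commutator estimate above before passing to the central decomposition; the rest of your argument (monotone convergence of $\mu\,\omega(R_\mu)$, the resolvent identity showing $E_f$ is idempotent, and the trivial converse direction) then matches the paper's proof.
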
  
\begin{proof}
  We choose $\mu = \varepsilon^{-1}$, $\varepsilon > 0$,
  and put $A_\varepsilon \doteq (1 + \varepsilon a^*(f) a(f) )^{-1}$. 
  In order to exhibit the properties of these operators in the
  limit of small $\varepsilon$, it is convenient to proceed to
  the (faithful) representation of the full resolvent
  algebra $\fR(\RR^s)$ on Fock space \cite{BuGr}.
  It is generated by the resolvents
  $R(\lambda,g) \doteq (i \lambda 1 + \phi(g))^{-1}$,
  where $\phi(g) = 2^{-1/2}(a^*(g) + a(g))$, 
  $g \in L^2(\RR^s)$, and $\lambda \in \RR \backslash \{ 0 \}$. 

\medskip   
It follows by a straightforward
  computation
  that the sequence $\varepsilon \mapsto A_\varepsilon$ 
  commutes in norm with all
  elements of $\fR(\RR^s)$ in the limit
  of small $\varepsilon$, \ie \ it is a central sequence.
  We briefly sketch the argument. Given 
  $g \in L^2(\RR^s)$ and $\lambda \in \RR \backslash \{ 0 \}$, one has 
  \be \label{e.3}
     [R(\lambda, g), A_\varepsilon] =
  R(\lambda, g) \, [\phi(g), A_\varepsilon] \, R(\lambda, g)
  = R(\lambda, g) \, [\phi(P_f g), A_\varepsilon] \, R(\lambda, g) \, ,
  \ee
  where $P_f$ is the projection onto the ray of $f$. 
  The intertwining relations between $a^*(f)$, $a(f)$, and functions
  of $a^*(f) a(f)$ imply 
  $\| [\phi(f), A_\varepsilon ] \| \leq
  (2 \varepsilon)^{1/2}$. Thus one arrives at the
  bound
\be
  \| [R(\lambda, g), A_\varepsilon] \| \leq
  (2 \varepsilon)^{1/2} \lambda^{-2} \, \| g \| \, ,
 \ee  
  from which the assertion follows. 

  \medskip
  Turning to the GNS representation induced by
  the given state $\omega$, the
  (for decreasing $\varepsilon$) monotonically increasing
  sequence $\varepsilon \mapsto A_\varepsilon \in \fA(\RR^s)$
  converges in the limit of small $\varepsilon$
  in the strong operator topology
  to some operator $P$ in the weak closure
  of $\fA(\RR^s)$. According to the preceding step, 
  it is an element of its center. Moreover,
  putting $A_\varepsilon(\mu) \doteq (\mu 1 + \varepsilon a^*(f)a(f))^{-1}$,
  one obtains
  $\lim_{\varepsilon \searrow 0} \, A_\epsilon(\mu)
  = (1/\mu) \, P$ for any $\mu > 0$.
  It then follows from the
  strong operator convergence of these resolvents and the resolvent equality
  $A_\varepsilon(\mu) - A_\varepsilon(\nu) = (\nu - \mu)  A_\varepsilon(\mu)
  A_\varepsilon(\nu)$ that $P$ is a projection. 
  Thus in a factorial representation it is equal 
  to either $0$ or $1$. In view of
  assumption \eqref{e.1}, the central decomposition of $\omega$
  contains a non-negligible set of
  primary components in which this projection is
  equal to $0$. The central projection $Z = (1 - P)$
  indicates the fraction of primary states in the
  decomposition of $\omega$, where this
  occurs. 
  Since $0 \leq A_{\varepsilon_1} \leq A_{\varepsilon_2}$
  if $\varepsilon_1 \geq \varepsilon_2$, it follows
  that all resolvents $A_\varepsilon$, $\epsilon > 0$, also vanish
  in these states.
  But then all $C_0$-functions of $a^*(f) a(f)$ vanish
  according to standard arguments (Stone-Weierstra{\ss} approximation).
  The converse  statement is trivial. 
\end{proof}
It is important to notice that the preceding characterization of
states exhibiting a proper condensate in some region does not yet
fix the wave functions of the particles forming it. As a
matter of fact, if $h \in L^2(\RR^s)$ is orthogonal to the
function $f$ in the  proposition and $\omega(a^*(h)a(h)) < \infty$,
one finds by  a straightforward estimate that one obtains for 
the resolvent \mbox{$\mu \mapsto (\mu 1 + a^*(f + h) a(f + h))^{-1}$} 
the same upper bound as for $f$ in relation \eqref{e.2}. So there exists
an abundance of wave functions which are infinitely occupied in the presence
of a proper condensate. This is not surprising since counting the
number of particles with a wave function having some overlap with
the proper condensate, no matter how small, must be expected to lead to an
infinite result. 

\medskip
In view of this situation it is more meaningful to determine the particles 
which do not have any overlap with a proper condensate. 

\medskip \noindent
\textbf{Definition:} \ Let $\omega$ be a state on $\fA(\RR^s)$.
A function $f \in L^2(\RR^s)$ is said to be \textit{regular} if 
\be
\lim_{\mu \rightarrow \infty} \, \mu \, 
\omega((\mu 1 + a^*(f) a(f))^{-1}) = 1 \, .
\ee
The set of all regular functions is denoted by $\cR(\RR^s)$ and the
regular functions having support in some region $\bO \subset \RR^s$
are denoted by $\cR(\bO)$. 

\medskip
The structure of the set of regular functions is clarified in
the subsequent lemma.

\begin{lemma}
  Let $\omega$ be a state on $\fA(\RR^s)$. The corresponding set
  $\cR(\RR^s)$ of regular functions is a complex subspace of
  $L^2(\RR^s)$. 
\end{lemma}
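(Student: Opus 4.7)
The plan hinges on the characterization implicit in the preceding proposition: writing $N_f \doteq a^*(f)a(f)$, the function $f$ is regular precisely when the central projection $P_f \doteq s\text{-}\lim_{\varepsilon\to 0^+} (1 + \varepsilon N_f)^{-1}$ (constructed in the GNS weak closure in the proof of Proposition~\ref{p.2.1}) satisfies $\omega(P_f) = 1$. Closure of $\cR(\RR^s)$ under scalar multiplication is immediate: $N_{\alpha f} = |\alpha|^2 N_f$, and the substitution $\mu \mapsto |\alpha|^2 \mu$ preserves the defining limit (with $\alpha = 0$ trivial).

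For closure under addition, fix $f, g \in \cR(\RR^s)$, so $\omega(P_f) = \omega(P_g) = 1$. Since $P_f, P_g$ are central they commute, and the positivity of the projection $(1 - P_f)(1 - P_g) = 1 - P_f - P_g + P_f P_g$ gives $1 - P_f P_g \le (1 - P_f) + (1 - P_g)$, whence $\omega(P_f P_g) = 1$. It therefore suffices to establish the operator inequality $P_{f+g} \ge P_f P_g$, equivalently, that every vector in the reducing subspace $H_{fg} \doteq P_f P_g \cH$ is fixed by $P_{f+g}$. The key observation is that any $\psi \in D(a(f+g)) = D(N_{f+g}^{1/2})$ automatically satisfies $P_{f+g}\psi = \psi$, since the defining approximants $(1 + \varepsilon N_{f+g})^{-1}$ converge strongly to the identity on this domain; and $D(a(f)) \cap D(a(g)) \subseteq D(a(f+g))$. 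It therefore remains to find a subset of $D(a(f)) \cap D(a(g))$ dense in $H_{fg}$.

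Given $\phi \in H_{fg}$ and $\delta > 0$, the natural candidate is
\[
\psi_\delta \doteq (1 + \delta N_f)^{-1}(1 + \delta N_g)^{-1}\phi\,.
\]
Two facts make $\psi_\delta$ admissible: the bound $\|a(f)(1 + \delta N_f)^{-1}\| \le (2\sqrt{\delta})^{-1}$ (and its analogue with $f \leftrightarrow g$), obtained by direct spectral calculus, and the resolvent-commutator identity
\[
[a(g),(1+\delta N_f)^{-1}] = -\delta\,\langle g,f\rangle\,(1+\delta N_f)^{-1}\,a(f)\,(1+\delta N_f)^{-1},
\]
whose right-hand side is likewise bounded in norm. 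Together they place $\psi_\delta$ in $D(a(f)) \cap D(a(g))$, while $P_f \phi = P_g \phi = \phi$ yields $\psi_\delta \to \phi$ in norm as $\delta \to 0$. By closedness of $P_{f+g}\cH$ this gives $H_{fg} \subseteq P_{f+g}\cH$, hence $\omega(P_{f+g}) \ge \omega(P_f P_g) = 1$ and $f + g \in \cR(\RR^s)$.

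The main obstacle is controlling $a(g)$ against the resolvent of $N_f$ when $f$ and $g$ overlap; the commutator identity above supplies the needed bound and is essentially the only additional analytic input beyond the central-sequence machinery already established in the proof of Proposition~\ref{p.2.1}.
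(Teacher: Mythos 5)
Your analytic ingredients are the right ones --- the bound $\|a(f)(1+\delta N_f)^{-1}\|\leq (2\sqrt{\delta})^{-1}$ and the commutator identity are exactly what underlies the paper's estimate --- but the way you assemble them contains a genuine gap. You work directly in the GNS representation of $\omega$ with the unbounded operators $a(f)$, $a(g)$, $a(f+g)$ and use the inclusion $D(a(f))\cap D(a(g))\subseteq D(a(f+g))$, i.e.\ the additivity $a(f+g)=a(f)+a(g)$ together with $N_{f+g}=a^*(f+g)a(f+g)$. These are identities of the \emph{Fock} representation. A general state on the gauge-invariant algebra $\fA(\RR^s)$ --- in particular the non-locally-normal condensate states this lemma is designed to handle --- is not normal with respect to Fock space; its GNS representation carries only the images of the resolvents $(\mu 1+a^*(h)a(h))^{-1}$, which are pseudo-resolvents with possibly nontrivial kernels, and no annihilation operators at all. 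The mutual relation between the pseudo-resolvents of $N_f$, $N_g$ and $N_{f+g}$ in such a representation is precisely what the lemma must establish, so invoking $a(f+g)=a(f)+a(g)$ there assumes the crux of the matter. (Your target inequality $P_{f+g}\geq P_fP_g$ is in fact true, but it has to be \emph{derived} from relations that survive the passage to an arbitrary representation.)

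The repair is to package your two facts as a statement about norms of elements of the C*-algebra itself, computed in the faithful Fock representation, since such bounds transfer to every representation. This is what the paper does: it shows that the algebra element $A_{\varepsilon_1}(f_1)\bigl(1-A_\varepsilon(f_1+f_2)\bigr)A_{\varepsilon_2}(f_2)$ tends to $0$ in norm as $\varepsilon\searrow 0$ for fixed $\varepsilon_1,\varepsilon_2>0$ (by writing $1-A_\varepsilon=\varepsilon A_\varepsilon a^*(f_1+f_2)a(f_1+f_2)$ and commuting the $f_1$-operators to the left resolvent and the $f_2$-operators to the right one --- your two facts in disguise), and then concludes $\omega(1-A_\varepsilon(f_1+f_2))\to 0$ by a three-epsilon argument using regularity of $f_1$ and $f_2$ and Cauchy--Schwarz. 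Incidentally, taking strong limits of that norm estimate yields $P_f(1-P_{f+g})P_g=0$ in any representation, which is exactly your inequality $P_{f+g}\geq P_fP_g$; so your reformulation via central projections is a correct corollary of the paper's estimate, but not an independent route around it. The scalar-multiplication part of your argument is fine.
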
  
\begin{proof}
  It is apparent that $\cR(\RR^s)$ is stable under multiplication with
  complex numbers. In order to see that it is stable under taking
  sums, we proceed as in the proof of the preceding
  proposition. Let
  $A_\delta(f) \doteq (1 + \delta a^*(f) a(f) )^{-1}$
  for $f \in \cR(\RR^s)$ and $\delta > 0$. These operators
  are monotonically increasing for decreasing $\delta$
  and converge in the limit of small $\delta$
  to the unit operator $1$ in the strong operator topology
  fixed by the underlying state. We then
  consider the operators
  \begin{align}
  & A_{\varepsilon_1}(f_1) \, \big(1 - A_\varepsilon(f_1 + f_2) \big) \,
  A_{\varepsilon_2}(f_2) \nonumber \\
  & = \varepsilon \, 
  A_{\varepsilon_1}(f_1) \, A_\varepsilon(f_1 + f_2) \, a^*(f_1 + f_2)
  a(f_1 + f_2) \,  A_{\varepsilon_2}(f_2) \, .
  \end{align} 
  Commuting the creation and annihilation operators containing the
  function $f_1$ through the middle term 
  to $A_{\varepsilon_1}(f_1)$ and keeping the operators containing
  $f_2$ next to $A_{\varepsilon_2}(f_2)$, one obtains an upper
  bound on the norm of these operators. It yields
  for fixed $\varepsilon_1 > 0$ and $\varepsilon_2 > 0$
  after a straightforward computation  
  \be
  \lim_{\varepsilon \searrow 0} \, 
  \|  A_{\varepsilon_1}(f_1) \, \big(1 - A_\varepsilon(f_1 + f_2) \big)
  A_{\varepsilon_2} \big) \| =  0 \, . 
  \ee
  Since $\| A_\delta(f) \| \leq 1$, it then follows by a
  three-epsilon argument for $f_1, f_2 \in \cR(\RR^s)$ that
  \be
  \lim_{\varepsilon \searrow 0} \, \omega(1 - A_\varepsilon(f_1 + f_2)) = 0
  \, ,
  \ee
  proving the statement.
\end{proof}   
The appearance of proper condensates was studied in \cite{BaBu} in 
non-interacting theories. For equilibrium states in a fixed
trapping potential they appear in the limit of infinite
particle numbers (maximal chemical potential). In these cases the
regular spaces $\cR(\RR^s)$ are closed subspaces of
$L^2(\RR^s)$. Their \textit{singular} orthogonal complement $\cS(\RR^s)$
describes the wave functions of the particles appearing 
in proper condensates in the limit states.

\medskip 
Proceeding first to the
thermodynamic limit by unfolding the trapping potential
and going subsequently to the limit of maximal chemical
potential, it turns out that the regular spaces $\cR(\RR^s)$
are dense in $L^2(\RR^s)$, but not closed \cite{BaBu}. 
This observation has an easy explanation: in the thermodynamic
limit the constituents of proper condensates can no longer
be described by normalizable wave functions, they become 
improper states, characterized by constant functions, polynomials
\etc. The elements of the subspaces $\cR(\RR^s)$ then serve as
test functions. Regardless of this fact, the restrictions of the improper
states to bounded regions $\bO$ are normalizable. As a consequence,
the local regular spaces $\cR(\bO)$ are closed. In one and two dimensions
they have orthogonal complements $\cS(\bO)$ in $L^2(\bO)$, describing 
proper condensates. In higher dimensions such non-trivial
orthogonal complements appear in the limit of infinite particle
densities. Examples, illustrating these facts, were
presented in~\cite{BaBu}.  

\medskip 
It seems worthwhile to extend this analysis to 
combined approximations, where the approach to the 
thermodynamic limit and the increase of particle
numbers are coupled. In order to illustrate the
present concepts, we perform such an analysis
in the subsequent section for non-interacting
bosons in regular trapping potentials. These potentials are 
unfolded quite arbitrarily in order to occupy an increasing number 
of particles, thereby approaching  
different kinds of thermodynamic limits. It turns out that
by unfolding the trapping potential too tardily, proper
condensates appear if one rapidly increases the particle
number. On the other hand, a swift unfolding of the trapping
potential leads exclusively to regular excitations.
In this manner one can exhibit distinct 
local wave functions $\cS(\bO)$ of proper condensates and
study in suitable approximations the onset of condensation.

\medskip
A major challenge, however, is the treatment of
theories describing interactions. We do not
tackle this demanding issue here and only summarize
the general strategy for the verification of
the appearance of
proper condensates in models, which is suggested by our results. 
Thinking of such applications, let
$\omega_n$ be a sequence of states, which is labeled by the 
particle number $n \in \NN$. In order to
exhibit the relevant structures, one
must proceed to the limit of this
sequence. The sequence may not converge,
but since we are dealing with an
algebra of bounded operators, it always
has limit points according to standard
compactness arguments. As a matter of
fact, one does not need information
about the limits on the full algebra.
It suffices to determine the expectation
values of resolvents of particle
number operators. Their 
analysis consists of the following steps. 

\bigskip \noindent 
\textit{Step 1:}  Determine for the given sequence of states 
in the limit of large $n \in \NN$ the regular space $\cR(\RR^s)$. If it
coincides with $L^2(\RR^s)$, there is no sign for proper 
condensation. If it is closed and has a non-trivial
orthogonal complement $\cS(\RR^s)$, the latter 
space describes the wave functions of a proper condensate.
If  $\cR(\RR^s)$ is dense, but not closed, proceed to the local
regular spaces $\cR(\bO)$. They are expected to be closed.
If they have an orthogonal complement $\cS(\bO)$, 
it describes locally a  proper condensate. 

\medskip \noindent 
\textit{Step 2:} Proceed to the analysis of the
structure of condensates appearing in the states
with a finite particle number. 
In cases of interest, the spaces $\cS(\bO)$
are expected to be finite dimensional. The corresponding
subalgebras of observables $\fA_{\cS}(\bO)$, 
involving resolvents of operators with
functions in $\cS(\bO)$, then have a simple structure.
In particular, the operators $N_{\cS}(\bO)$, determining
the number of condensate particles in $\bO$, are affiliated
with $\fA_{\cS}(\bO)$. Moreover, the restrictions of  $\fA_{\cS}(\bO)$
to states with a finite particle number are isomorphic
to matrix algebras. Thus, for any given total particle number
$n \in \NN$, one can accurately determine and
manipulate the properties of the
condensate fraction, such as its particle content.
This does not affect  properties of the regular
excitations, described by the regular algebra
of observables $\fA_{\cR}(\bO)$, which is assigned to the regular
functions $\cR(\bO)$ and commutes with
$\fA_{\cS}(\bO)$. This feature resembles the
empirical fact that for given material content
of a system, it is frequently possible to split it quite
arbitrarily into different phases.  

\medskip \noindent 
\textit{Step 3:} Study the onset of proper condensation.
According to folklore, there do not exist phase 
transitions in finite systems. But, having identified
the condensate spaces $\cS(\bO)$ appearing in the limit of
infinite particle numbers, one can analyze  the 
emergence of proper condensates in a given sequence of
states. This is accomplished by computing the expectation values of the
number operators $N_{\cS}(\bO)$ for the proper condensate,
$
n \mapsto \omega_n(N_{\cS}(\bO)) \, , 
$
and comparing it with the total number of particles in $\bO$,
given by $n \mapsto \omega_n(N(\bO))$.  
Particularly interesting are the cases where the difference
between these data remains bounded in the limit,
\be
\limsup_n \, \omega_n(N(\bO) - N_{\cS}(\bO)) = m_{\cR}(\bO) < \infty \, .
\ee
This happens either if the total number of particles in $\bO$
stays bounded, or if the regular components, described by
$\cR(\bO)$, are saturated in the limit. One can then define
the onset of proper condensation in $\bO$
by the number $n_c(\bO)$ of particles in the condensate
which surpass this limit,  
\be
\omega_n(N_{\cS}(\bO)) \geq m_{\cR}(\bO) \, , \quad n \geq n_c(\bO) \, .
\ee
In equilibrium states one may then use this
condition in order to study its relation with other data, such
as the temperature of the states. 

\vspace*{-4mm}
\section{Application of the framework to   
  non-interacting ground states}
\setcounter{equation}{0}  

\medskip \noindent 
In order to illustrate the concept of proper condensation and its 
usage in the interpretation of a theory, we consider the simple example
of non-interacting, trapped bosons in the
ground state and passages to the thermodynamic limit.
As already mentioned, there exist various approximations of this limit
which differ by the spaces of wave functions, describing the resulting 
proper condensates. Depending on the approximation,
one can exhibit in states
with a finite particle number
more or less detailed spatial patterns of the
proper condensates, establish the existence of coexisting phases, and encounter
critical densities of the regular fractions.  

\medskip 
   To keep the discussion simple, we assume that the
   one-particle Hamiltonian on the space of wave functions $L^2(\RR^s)$ 
   is of the form $H_1 = \bP^2 + V(\bQ)$, where $\bP, \bQ$ are the
   momentum and position operators and $V$ is some
   real analytic trapping potential, tending
   sufficiently rapidly to infinity at large
   distances. (As a matter of fact, less stringent
   smoothness properties of the potential would be 
   sufficient.) The normalized ground state wave function
   $\bx \mapsto g_1(\bx) \in L^2(\RR^s)$ of the single particle
   states is then
   also real analytic \cite[Ch.\ 3.4]{BeSh}. The ground state of the 
   corresponding
   non-interacting system of $n$ particles in the  trapping potential
   is given by
   \be
   \Omega_n = (n!)^{-1/2} \, a^*(g_1)^n \, \Omega_0 \, ,
   \ee
   where $a^*$ denotes the canonical creation
   operators and $\Omega_0$ the Fock vacuum vector.
   
\medskip
   We will proceed to the
   thermodynamic limit by unfolding the trapping potential, 
   which yields for $\lambda > 0$ the scaled Hamiltonians
   $H_\lambda \doteq \bP^2 + \lambda^2 \, V(\lambda \bQ)$.
   The corresponding scaled and normalized ground states are 
   $\bx \mapsto g_\lambda(\bx) \doteq \lambda^{s/2} g_1(\lambda \bx)$ 
   and the scaled $n$-particle states are
   \be \label{be.2} 
   \Omega_{n , \lambda} = (n!)^{-1/2} \, a^*(g_\lambda) \, \Omega_0 \, ,
   \quad n \in \NN \, , \ \lambda > 0 \, .
   \ee 
   In the passage to the thermodynamic limit we need to
   couple the particle number $n$ and the scaling parameter
   $\lambda$,
   $n \mapsto \lambda(n)$, which will be
   accomplished in different ways.

   \subsection{Regular and singular wave functions} 
\noindent We determine now the spaces of regular wave functions
$\cR(\RR^s)$, describing finitely occupied
states, and of singular wave functions
$\cS(\RR^s)$, describing proper condensates in
the thermodynamic limit. They are identified by
computing in a given sequence of states the expectation values
of resolvents of the particle number operators
$a^*(f)a(f)$, \mbox{$f \in L^2(\RR^s)$},  and proceeding 
to the limit of infinite particle numbers. It 
follows from the subsequent lemma that
the relevant information is encoded in the 
transition probabilities between $f$ and the ground states,
multiplied by the particle number,
\viz \ $n \, | \langle g_{\lambda(n)}, f \rangle |^2$. 
\begin{lemma} \label{l.3.1}
     Let $g_n \in L^2(\RR^s)$ 
     be a sequence of normalized wave functions and let 
     $\omega_n$ be the $n$-particle states, given by the vectors 
     $\Omega_n \doteq (n!)^{-1/2} a^{*}(g_n) ^n \, \Omega_0$, \ 
     $n \in \NN$. If $f \in  L^2(\RR^s)$ is 
     such that $\lim_n n \, | \langle g_n, f \rangle |^2 = 0 $, 
     one obtains for all $\mu > 0$ 
     \be
     \lim_n \, \omega_n((\mu 1 + a^*(f)a(f))^{-1}) = \mu^{-1} \, .
     \ee
     If $f \in  L^2(\RR^s)$ satisfies 
     $\lim_n n \, | \langle g_n, f \rangle |^2 = \infty $,
     then, for all  $\mu > 0$,
     \be
     \lim_n \,
     \omega_n((\mu 1 + a^*(f)a(f))^{-1}) = 0 \, . 
     \ee
   \end{lemma}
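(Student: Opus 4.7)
The plan is to reduce the expectation value to a finite-dimensional computation in the two-mode Fock space over $\mathrm{span}(g_n, f) \subset L^2(\RR^s)$. Since $\Omega_n$ places all particles in mode $g_n$ and the operator $a^*(f)a(f)$ touches only $\mathrm{span}(g_n, f)$, the remaining modes factor out trivially.

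\textbf{Diagonalisation.} Set $\alpha_n \doteq \langle g_n, f\rangle$ and $\beta_n \doteq (\|f\|^2 - |\alpha_n|^2)^{1/2}$, and pick $e_n \in L^2(\RR^s)$ with $e_n \perp g_n$ so that $f = \alpha_n g_n + \beta_n e_n$ (the degenerate case $\beta_n = 0$ is handled by the same calculation in one dimension less). Introduce the orthonormal mode operators $b_1^* \doteq \|f\|^{-1} a^*(f)$ and $b_2^*$, chosen orthogonal in the two-mode space, so that $[b_i, b_j^*] = \delta_{ij}$ and $a^*(f) a(f) = \|f\|^2 \, b_1^* b_1$. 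Inverting the linear change of modes yields $a^*(g_n) = (\overline{\alpha_n}/\|f\|) \, b_1^* - (\beta_n/\|f\|) \, b_2^*$.

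\textbf{Binomial expansion.} Since $b_1^*$ and $b_2^*$ commute, applying the binomial theorem to $(a^*(g_n))^n$ and letting $|n-k, k\rangle$ denote the normalized basis vector with $n-k$ excitations in mode $b_1$ and $k$ in $b_2$ gives
\begin{equation}
\Omega_n = \sum_{k=0}^n \binom{n}{k}^{1/2} \left( \frac{\overline{\alpha_n}}{\|f\|} \right)^{n-k} \left( \frac{-\beta_n}{\|f\|} \right)^{k} |n-k, k\rangle \, .
\end{equation}
Because the resolvent acts diagonally on this basis with eigenvalue $(\mu + \|f\|^2 (n-k))^{-1}$, one obtains the closed form
\begin{equation}
\omega_n\bigl((\mu 1 + a^*(f) a(f))^{-1}\bigr) = \sum_{k=0}^n \binom{n}{k} p_n^{n-k} (1 - p_n)^k \, \frac{1}{\mu + \|f\|^2 (n-k)}
\end{equation}
with $p_n \doteq |\alpha_n|^2 / \|f\|^2$. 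The right-hand side is the expectation of $(\mu + \|f\|^2 X_n)^{-1}$ for a random variable $X_n$ binomially distributed with parameters $(n, p_n)$, whose mean equals $n p_n = n |\alpha_n|^2 / \|f\|^2$.

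\textbf{Asymptotics.} In the regime $n |\alpha_n|^2 \to 0$ one has $P(X_n = 0) = (1 - p_n)^n \to 1$; since the summand is uniformly bounded by $1/\mu$, dominated convergence returns the limit $1/\mu$. In the regime $n |\alpha_n|^2 \to \infty$ the mean $n p_n$ diverges, and from $\mathrm{Var}(X_n) = n p_n (1 - p_n) \le n p_n$ Chebyshev's inequality delivers $P(X_n \le M) \to 0$ for every fixed $M$. Splitting the expectation according to $\{X_n > M\}$ and $\{X_n \le M\}$, using the bounds $(\mu + \|f\|^2 X_n)^{-1} \le (\mu + \|f\|^2 M)^{-1}$ on the first event and $\le 1 / \mu$ on the second, and sending first $n \to \infty$ and then $M \to \infty$, produces the limit $0$. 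The only step requiring care is the combinatorial bookkeeping in the mode change; once the binomial formula is established, both assertions follow from elementary concentration properties of the binomial distribution.
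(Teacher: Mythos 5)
Your proof is correct, and it reaches the same central identity as the paper --- the representation of $\omega_n((\mu 1+a^*(f)a(f))^{-1})$ as the expectation of $(\mu+X_n)^{-1}$ for a binomial random variable $X_n$ with parameters $(n,|\langle g_n,f\rangle|^2)$ (the paper normalizes $f$ at the outset, which removes your factors of $\|f\|^2$; this is harmless since both hypotheses and both conclusions are insensitive to that rescaling) --- but by a genuinely different route on both sides of that identity. To derive it, the paper commutes $a(g_n)$ through the resolvent using the intertwining relations $R_f(\mu)a^*(f)=a^*(f)R_f(\mu+1)$, obtaining the two-term recursion $\omega_n(R_f(\mu))=|\langle g_n,f\rangle|^2\,\omega_{n-1}(R_f(\mu+1))+(1-|\langle g_n,f\rangle|^2)\,\omega_{n-1}(R_f(\mu))$ and then inducting; you instead diagonalize directly by a two-mode change of basis, which is arguably more transparent and makes the probabilistic structure visible at once (your bookkeeping checks out: the coefficients $\binom{n}{k}^{1/2}c_1^{n-k}c_2^k$ square-sum to $(|c_1|^2+|c_2|^2)^n=1$). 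For the asymptotics, the paper manipulates the binomial sum by explicit index shifts, obtaining the quantitative bounds $|\omega_n(R_f(\mu))-\mu^{-1}|\le n|\langle g_n,f\rangle|^2\mu^{-2}$ and $\omega_n(R_f(\mu))\le((n+1)|\langle g_n,f\rangle|^2)^{-1}(1+\mu^{-1})$, whereas you invoke $P(X_n\ge 1)\le np_n$ and Chebyshev concentration. The paper's estimates are sharper and fully explicit in $n$, which matters for its later quantitative discussion of the onset of condensation; your argument is softer but suffices for the stated limits and generalizes more readily to other functions of $X_n$. Your handling of the degenerate case $\beta_n=0$ and of the factorization over the modes orthogonal to $\mathrm{span}(g_n,f)$ is also sound.
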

   \noindent \textbf{Remark:} In the former case, the wave function $f$
   is regular and the limit state coincides on the corresponding observables
   with the Fock vacuum. In the latter case, the wave function $f$ 
   indicates the presence of some proper condensate.
   \begin{proof}
  Without loss of generality we may assume that $f$ is normalized,
  hence $|\langle g_n, f \rangle | \leq 1$. Putting 
  $R_f(\mu) \doteq (\mu 1 + a^*(f) a(f))^{-1}$ and applying 
  arguments given in Proposition \ref{p.2.1}, we have 
  \begin{align}  
  &  \omega_n(R_f(\mu)) = \langle \Omega_n, \, R_f(\mu) \, \Omega_n \rangle
    \nonumber  \\
    & = (n)^{-1/2} \langle \Omega_{n-1} [a(P_fg_n), R_f(\mu))] \, \Omega_n \rangle
    + (n)^{-1/2}  \langle \Omega_{n-1},  R_f(\mu)) \, a(g_n) \, \Omega_n \rangle
    \nonumber \\
  & = | \langle g_n, f \rangle |^2 \, \omega_{n-1}(R_f(\mu + 1) - R_f(\mu)) 
  +  \omega_{n-1}(R_f(\mu)) \nonumber \\
  & =  | \langle g_n, f \rangle |^2 \, \omega_{n-1}(R_f(\mu + 1)) +
  (1 - | \langle g_n, f \rangle) |^2 \,  \omega_{n-1}(R_f(\mu)) \, . 
  \end{align}
  Making use of 
  $\omega_0(R_f(\mu)) = 1/\mu$, it follows by induction
  that
  \be \label{be.13}
  \omega_n(R_f(\mu)) = \sum_{k = 0}^n  (\mu + k)^{-1}
  \mbox{\large $\binom{n}{k}$} \,  | \langle g_n, f \rangle |^{2k} \,
  (1 -  | \langle g_n, f \rangle |^2)^{n-k} \, .
  \ee
  Now let $\lim_n n \, |\langle g_n, f \rangle|^2 = 0$. Then
  \begin{align} \label{be.12}
  &  |   \omega_n(R_f(\mu)) - \mu^{-1}  |
    =  \sum_{k = 1}^n \big(\mu^{-1} -  (\mu + k)^{-1} \big)
  \mbox{\large $\binom{n}{k}$} \,  | \langle g_n, f \rangle |^{2k} \,
  (1 -  | \langle g_n, f \rangle |^2)^{n-k}  \nonumber \\
  & = n \, | \langle g_n, f \rangle |^2 \,
  \sum_{l = 0}^{n-1} \mu^{-1} (\mu + l + 1)^{-1}
   \mbox{\large $\binom{n -1}{l}$} \,  | \langle g_n, f \rangle |^{2l} \,
   (1 -  | \langle g_n, f \rangle |^2)^{(n -1) -l} \nonumber \\
  & \leq  n \, | \langle g_n, f \rangle |^2 \,
   \mu^{-2} \sum_{l = 0}^{n-1} 
   \mbox{\large $\binom{n -1}{l}$} \,  | \langle g_n, f \rangle |^{2l} \,
   (1 -  | \langle g_n, f \rangle |^2)^{(n -1) -l} \nonumber \\ 
 &  =  n \, | \langle g_n, f \rangle |^2 \, \mu^{-2} \, .
  \end{align}  
  Hence the sequence  converges to $0$, as stated.
  
\medskip 
  Next, we consider the case that
  $\lim_n \, n \,  | \langle g_n, f \rangle |^2  = \infty$. 
  Thus we may assume that
  $ | \langle g_n, f \rangle |^2 \neq 0$ and obtain  
  \begin{align} \label{be.14}
  &  \omega_n(R_f(\mu))
    =  \sum_{k = 0}^n (\mu + k)^{-1} 
  \mbox{\large $\binom{n}{k}$} \,  | \langle g_n, f \rangle |^{2k} \,
  (1 -  | \langle g_n, f \rangle |^2)^{n-k}  \nonumber \\
  & = \big( (n+1) \, | \langle g_n, f \rangle |^2 \big)^{-1} \,
  \sum_{l = 1}^{n+1} (\mu - 1 + l)^{-1} \, l \,  
  \mbox{\large $\binom{n + 1}{l}$} \,
  | \langle g_n, f \rangle |^{2 l} \,
   (1 -  | \langle g_n, f \rangle |^2)^{(n +1) -l} \nonumber \\
  & \leq 
 \big( (n+1) \, | \langle g_n, f \rangle |^2 \big)^{-1} 
  (1 + \mu^{-1}) \, \sum_{l = 0}^{n+1} \,   \mbox{\large $\binom{n + 1}{l}$}
   | \langle g_n, f \rangle |^{2l} 
   \big(1 -  | \langle g_n, f \rangle |^2 \big)^{(n +1) -l} \nonumber \\
& =  \big( (n+1) \, | \langle g_n, f \rangle |^2 \big)^{-1} \,  (1 + \mu^{-1})
\, .
  \end{align}
The expression in the last line converges to $0$, 
completing the proof. 
 \end{proof}  

The wave functions that are regular 
in the thermodynamic limit can 
now be determined. Let us recall that it is
meaningful to restrict 
attention in this limit to regular functions with compact
support. 
Because, the proper condensates are globally described by distributions,
but their restrictions to bounded regions $\bO$ are 
square integrable functions $\cS(\bO)$, which lie in the orthogonal complements
of the regular functions $\cR(\bO)$. 
According to the preceding lemma, 
the transition probabilities of
the wave functions $f$ to the scaled ground states,
multiplied by the particle number, contain the relevant
information about their interpretation as members of
either one of these spaces. 

\medskip 
In order to see how this assignment is related to properties of the
ground state wave functions, let
$\lambda \mapsto n(\lambda) \doteq c  \lambda^{- \kappa}$ for some 
$\kappa > 0$. There exists a corresponding dense set of functions
$f \in L^2(\RR^s)$ such that
$\lambda^{-\kappa} \, |\langle g_\lambda, f \rangle|^2 \rightarrow 0$
in the scaling limit $\lambda \searrow 0$. For the proof
we make use of the fact that
   $g_1$, being real analytic, can be expanded in a Taylor series
   about~$0$. Thus, for any given $k \in \NN$, there exists a
   polynomial $\bx \mapsto P_k(\bx)$ of degree $(k-1)$ such that  
   $|g_1(\bx) - P_k(\bx)| \leq c_k \, | \bx |^k$ for
   $|\bx| < R_0$, where $R_0$ depends on the analyticity
   properties of $g_1$. 
   This yields for the integrals of the scaled functions
   over the balls
   $B_R = \{ \bx \in \RR^s : |\bx| < R \}$ and
   scalings $0 < \lambda < R_0 / R$
   \be
   \int_{B_R} \! d\bx \, \lambda^s \,  |g_1(\lambda \bx) - P_k(\lambda \bx)|^2
   \leq c_k(R) \, \lambda^{s + 2k} \, .
   \ee

   Now let $f \in L^2(\RR^s)$
   be any function with compact support such that its
   (entire analytic) Fourier transform $\widetilde{f}$ vanishes
   sufficiently rapidly at the origin. More precisely,
   $P_k(i \lambda \, \bpartial_{\bp}) \,  \widetilde{f}(\bp) |_{\bp = 0} = 0$
   for all $\lambda > 0$, where $\bpartial_{\bp}$ denotes the
   gradient in momentum space. Since the linear span
   of the scaled polynomials is finite dimensional, such
   functions exist. As a matter of fact, their linear span is
   dense in~$L^2(\RR^s)$. Now, given any such function, it has
   support in $B_R$ for sufficiently large~$R$, so one
   obtains for small $\lambda$  
   \be
   | \langle g_\lambda, f \rangle |^2
   =   | \langle ( g_\lambda - P_{k , \lambda} ) , f \rangle |^2
   \leq  {c}_k(R) \, \| f \|^2  \, \lambda^{s + 2k} \, .
   \ee 
   Let $k \in \NN_0$ be  
   the smallest number such that 
   \mbox{$k > (\kappa - s)/2$}. It follows that 
   $\lambda^{-\kappa} \, |\langle g_\lambda, f \rangle|^2 \rightarrow 0$
   in the limit of small $\lambda$, as stated.

   \medskip 
   If $\kappa < s$,
   the above limit is equal to $0$ for all functions $f$ with compact support.
   So the local spaces of regular functions $\cR(\bO)$
   do not have an orthogonal
   complement in $L^2(\bO)$ for any bounded region $\bO \subset \RR^s$, 
   there appear no proper condensates in the thermodynamic limit. 
   If $\kappa > s$ there arise for any $f$ with compact support 
   the following clear-cut alternatives: either the limit is $0$, 
   or it approaches $\infty$. In the former case, $f$ is a regular
   member of some space $\cR(\bO)$. It is orthogonal to
   the homogeneous parts of the approximating polynomial $P_k$, describing
   a proper condensate, whose square integrable restrictions
   to bounded regions $\bO$ form the singular spaces $\cS(\bO)$. 
   In the latter case, the scalar product of $f$ with a condensate
   wave function is different from $0$. It then follows that
   $\lambda^{-\kappa} \, |\langle g_\lambda, f \rangle|^2$
   approaches infinity for small $\lambda$. Note that $f$
   may not be a member of a space $\cS(\bO)$, it must merely 
   have some overlap with a singular function. 
   In spite of this feature, one can unambiguously identify the regular and
   singular wave functions for the given values of $\kappa$.

\subsection{Coexistence of phases}
   \noindent 
   Whereas in the preceding two cases either the proper condensate or
   the regular excitations dominate in the thermodynamic limit, the situation
   is different for $\kappa = s$. It turns out in this
   intermediate case that the regular
   excitations can coexist with a condensate in arbitrary portions,
   \viz  they form coexisting phases. 
  So let $n \mapsto \lambda(n) \doteq \sigma \, n^{-1/s}$ be
  given, $\sigma > 0$. The resulting states 
  are locally normal in the thermodynamic limit, \ie \ their restrictions
  to any local observable algebra $\fA(\bO)$ can be represented by a density
  matrix in Fock space, depending on the bounded region $\bO$. There
  are no proper condensates in these states; but they appear
  if one lets $\sigma$ tend to infinity. So, again, one can 
  identify the regular wave functions
  and the wave functions of the asymptotic proper condensate  
  and disentangle the contributions which are due to 
  the onset of proper condensation from those of the regular excitations.

  The strategy of proof is the same as in the preceding lemma. 
  Assuming that the unscaled
  ground state wave function $g_1$ is different from $0$ at the
  origin, one obtains by a straightforward computation 
  for functions $f \in L^2(\RR^s)$ with compact support 
  \be
  \nu_f  \doteq \lim_n \,
  n \, | \langle g_{\lambda(n)} , f \rangle |^2 
  = \sigma^s \, |g_1(0)|^2 \, \Big| \! 
  \int \! d \bx \, f(\bx)  \Big|^2 \, .
  \ee 
  Thus if $\sigma$ tends to infinity, this expression
  diverges, unless the Fourier transform of $f$ vanishes
  at $0$. So the spaces of regular
  functions $\cR(\bO)$ consist of 
  square integrable functions with support in $\bO$
  and Fourier transforms which vanish
  at the origin. The condensate spaces $\cS(\bO)$ 
  consist of their orthogonal complements in $L^2(\bO)$, 
  \ie \ the constant functions in~$\bO$. Making use of 
  arguments in the proof of the preceding lemma, we
  obtain the following result. 
  \begin{lemma}
    Let  $\sigma > 0$, let $n \mapsto \lambda(n) = \sigma \, n^{- 1/s}$,
    and let $\omega_{n, \lambda(n)}$ be the states 
    deter\-mined by the  vectors \ 
    $\Omega_{n, \lambda(n)}$ in equation \eqref{be.2}, $n \in \NN$. 
    For any compactly sup\-ported function
    $f \in L^2(\RR^s)$ and $\mu > 0$, one has 
    \be
    \lim_n \, \omega_{n, \lambda(n)}((\mu 1 + a^*(f)a(f))^{-1}) =
    \mu^{-1} \Big(1 - \nu_f \, e^{- \nu_f}
    \int_0^1 \! d \upsilon \, \upsilon^\mu \, e^{\, \nu_f \, \upsilon}
    \Big) \, .
    \ee
  \end{lemma}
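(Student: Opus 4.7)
My plan is to exploit the explicit formula \eqref{be.13} from the proof of Lemma \ref{l.3.1}. Applied with $g_n$ replaced by the scaled ground state $g_{\lambda(n)}$ and with $p_n \doteq |\langle g_{\lambda(n)}, f \rangle|^2$, it yields
\begin{equation}
\omega_{n,\lambda(n)}(R_f(\mu)) = \sum_{k=0}^n \frac{1}{\mu+k}\, \binom{n}{k}\, p_n^{k}\, (1-p_n)^{n-k},
\end{equation}
which I would read as the expected resolvent $\mathbb{E}[(\mu + B_n)^{-1}]$ of a binomially distributed variable $B_n$ with parameters $n$ and $p_n$. The idea is then to pass to the Poisson limit of this binomial and rewrite the resulting series as the integral stated in the lemma.

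First I would verify that $\lim_n n p_n = \nu_f$. Since $g_1$ is continuous with $g_1(0) \neq 0$ and $f$ has compact support, $\lambda^{-s/2} g_\lambda$ converges uniformly to $g_1(0)$ on $\mathrm{supp}\, f$ as $\lambda \searrow 0$, so $\langle g_{\lambda(n)}, f \rangle = \lambda(n)^{s/2}\bigl(g_1(0)\int f(\bx)\, d\bx + o(1)\bigr)$; combined with $\lambda(n)^s = \sigma^s/n$ this gives $n p_n \to \nu_f$. This is essentially the computation already indicated in the text just above the lemma.

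Next I would invoke the classical Poisson limit theorem: since $n p_n \to \nu_f$ and $p_n \to 0$, one has $\binom{n}{k} p_n^k (1-p_n)^{n-k} \to \nu_f^k e^{-\nu_f}/k!$ for every fixed $k$. After justifying the interchange of limit and sum (see below), this yields
\begin{equation}
\lim_n \omega_{n,\lambda(n)}(R_f(\mu)) = e^{-\nu_f} \sum_{k=0}^\infty \frac{\nu_f^k}{k!\,(\mu+k)}.
\end{equation}
To match the form in the statement, I would insert $(\mu+k)^{-1} = \int_0^1 \upsilon^{\mu+k-1} d\upsilon$ and exchange sum and integral to obtain $e^{-\nu_f}\int_0^1 \upsilon^{\mu-1} e^{\nu_f \upsilon}\, d\upsilon$; a single integration by parts with $u = e^{\nu_f \upsilon}$, $dv = \upsilon^{\mu-1}\, d\upsilon$ then converts this into the expression written in the lemma.

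The only genuinely delicate step is the interchange of limit and infinite sum, and this is the main obstacle in the sense of requiring care, though it is not serious. A uniform majorant of the form $C^k/k!$ (with $C$ any upper bound on $\sup_n n p_n$) is summable against $1/(\mu+k)$, so dominated convergence on counting measure applies term by term. Alternatively, since $(\mu+k)^{-1}$ tends to $0$ as $k \to \infty$ and $\mathrm{Prob}(B_n > K)$ is controlled uniformly in $n$ by Markov's inequality, the tail of the sum is uniformly small and the problem reduces to convergence on a finite set of indices. Everything else is a direct application of ingredients already assembled in the text.
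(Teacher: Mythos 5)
Your proposal is correct and follows essentially the same route as the paper: both start from the explicit binomial formula \eqref{be.13}, pass to the Poisson limit using $n\,|\langle g_{\lambda(n)},f\rangle|^2 \to \nu_f$ together with a dominated-convergence argument (the paper bounds the product under the sum by $e^{\nu_{f,n}}$, you use the summable majorant $C^k/k!$), and arrive at $e^{-\nu_f}\sum_{k\geq 0}\nu_f^k/(k!(\mu+k))$. Your final conversion to the integral form via $(\mu+k)^{-1}=\int_0^1 \upsilon^{\mu+k-1}\,d\upsilon$ and one integration by parts is exactly the ``routine computation'' the paper leaves implicit, and it checks out.
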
  
  \begin{proof}
    Making use of relation \eqref{be.13} and putting
    $\nu_{f,n} \doteq n \,  | \langle g_{\lambda(n)} , f \rangle |^2$,
    one obtains  
    \begin{align} 
      & \omega_{n, \lambda(n)}(R_f(\mu)) \nonumber \\
      & = (1 - \nu_{f,n}/n )^{\, n} \, 
    \sum_{k = 0}^n  (\mu + k)^{-1} (k!)^{-1} \nu_{f,n}^k \ 
    \prod_{l = 0}^{k-1} \big( (1 - l/n) \, (1- \nu_{f,n}/n)^{-1} \big)      \, .
    \end{align} 
    In the limit of large $n$ one has
    $\nu_{f,n} \rightarrow \nu_f$ and
    $(1 - \nu_{f,n}/n )^{\, n} \rightarrow e^{- \nu_f}$.
    The product under the sum has the upper bound
    $e^{\nu_{f,n}}$ and hence is uniformly bounded
    in $k$ and $n$. Thus, by an application of the dominated
    convergence theorem, one arrives~at
      \be \label{be.17}
      \lim_{n \rightarrow \infty} \omega_{n, \lambda(n)} (R_f(\mu)) =
      e^{- \nu_f} \, \sum_{k = 0}^\infty \, 
      (\mu + k)^{-1} (k!)^{-1} \, \nu_f^k \, .
      \ee
      This result coincides with the expression given
      in the statement, as can be established by a routine computation. 
  \end{proof}   
  It follows from this lemma that 
  the resolvents of the particle number operators 
  $N(f) \doteq a^*(f) a(f)$ for compactly supported,
  normalized functions
  $f$ are regular in all limit states. This is
  in accord with the statement that these states
  are locally normal. It also follows from  the lemma
  that the mean of $N(f)$ in the limit states is given by
  $\omega_\infty(N(f)) = \nu_f$. Thus it is $0$ for regular
  functions in $\cR(\bO)$, which indicate the Fock vacuum. 
  For normalized singular functions $s \in \cS(\bO)$ one obtains
  $\omega_\infty(N(s)) = \nu_s = \sigma^s |g_1(0)|^2 \, |\bO|$,
  which describes a homogeneous condensate with a density
  determined by $\sigma$. As already mentioned, it
  becomes a proper condensate if $\sigma$ tends to infinity.

\subsection{Spatial structure of condensates}
\noindent 
In case of the special scalings considered in the preceding subsection, 
the condensates which appeared there were spatially homogeneous.
Yet it is an empirical fact that systems with given
material content can often be split quite arbitrarily into
different phases localized in differing regions,
so the corresponding states are inhomogeneous. 

\medskip 
We will show now that such more complex structures emerge
in the present model 
if one proceeds to scalings $n \mapsto \lambda(n) = \sigma \, n^{- 1/\kappa}$
with $\kappa > s$. As we have seen, the proper condensates
are determined by polynomials $\bx \mapsto P_k(\bx)$,
approximating the wave function of the ground state, where
\mbox{$k > (\kappa - s)/2$} denotes the degree of approximation. 
Because of the scaling involved in the passage to the
thermodynamic limit, the homogeneous pieces of this
polynomial are of interest. Their restrictions to
bounded regions $\bO$ form the finite dimensional spaces
$\cS(\bO)$ of singular wave functions, having the
regular functions $\cR(\bO)$ in their orthogonal
complement. 

\medskip
We keep the bounded region $\bO$ fixed in the following
and split the $n$-particle space, formed by wave functions
with support in this region, into a sum of symmetric tensor products
containing regular, respectively singular functions,
 \be
 \cF_n(L^2(\bO)) = \sum_{l=0}^n \cF_l(\cS(\bO)) \otimes \cF_{n-l}(\cR(\bO)) \, ,
 \ee
 in an obvious notation. 
 The action of the algebra of condensate observables $\fA(\cS(\bO))$, being 
 generated by resolvents with functions in $\cS(\bO)$,
 affects only the finite dimensional subspaces
 $\cF_l(\cS(\bO))$, the regular
 subspaces $\cF_{n-l}(\cR(\bO))$ remain untouched.
 Thus  $\fA(\cS(\bO))$ acts like a matrix algebra
 on $\cF_n(L^2(\bO))$. Whence, by a finite number of operations,  one can
 analyze and modify the condensate part of the
 $n$-particle states. In particular, one can discriminate 
 the wave functions of particles in the condensate.
 Doing this in all regions 
 $\bO$, one can thereby unravel their spatial structures.

\medskip 
 Furthermore, 
 given any number $0 \leq k \leq n$ of particles in the condensate,
 there is a projection $P_k \in \fA(\cS(\bO))$ which
 projects onto the subspace, 
 $P_k : \cF_n(L^2(\bO)) \rightarrow \cF_k(\cS(\bO)) \otimes
 \cF_{n-k}(\cR(\bO))$.
 Applying this projection to the vector $\Omega_{n, \lambda(n)}$ in 
 equation \eqref{be.2}, one obtains, apart from some numerical factor,  
 \be
 P_k \, \Omega_{n, \lambda(n)} \approx
 a^*(P_1 \, g_{\lambda(n)})^k \, a^*((1- P_1) \,
 g_{\lambda(n)})^{n-k}  \, \Omega_0 \, .
 \ee
 These vectors define  product states $\omega_{n,k}$ on the algebra
 $\fA(\cS(\bO)) \otimes \fA(\cR(\bO))$, 
 \be
 \omega_{n,k}(A B) = \overline{\omega}_k(A)
 \, \overline{\omega}_{n-k}(B) \, , \quad
 A \in \fA(\cS(\bO)), \ B \in  \fA(\cR(\bO)) \, .
 \ee
 Here $\overline{\omega}_l$ are the $l$-particle states
 with state vectors
 $\overline{\Omega}_{l,\lambda(n)} \doteq (l!)^{-1/2}
 a^*(g_{\lambda(n)})^l \, \Omega_0$, $0 \leq l \leq n$.
 Thus, picking any $0 \leq k \leq n$, 
 the fraction $k/n$ of condensate in the
 resulting states $\omega_{n,k}$ 
 can be fixed without affecting the total number $n$ of particles.  

\medskip 
So, to summarize, the amount of condensate in the given finite system
at zero temperature can be arbitrarily adjusted, just like its
spatial patterns, which are determined by the choice of the 
trapping potential.
 
\subsection{Critical densities}

\noindent
In case of scalings for which proper condensates appear in 
the thermodynamic limit, these outrun eventually the occupation numbers
of all other states. Given a space of singular
wave functions $\cS(\bO)$, which has been identified for
some particular scaling, it is therefore of interest to
analyze the fate of the corresponding regular excitations
by changing the scaling and studying the number of particles with
wave functions in $\cR(\bO) \doteq \cS(\bO)^\perp \cap L^2(\bO)$.
These numbers can easily be determined.

\medskip
Let $N(\bO)$, $N_{\cS}(\bO)$, and
$N_{\cR}(\bO)$ be, for given region $\bO$, the total number
operator, the
operator counting the particles with wave functions in $\cS(\bO)$,
respectively in the corresponding regular space $\cR(\bO)$. 
In order to simplify the discussion, let us assume
that $\cS(\bO)$ is one-dimensional, consisting of multiples 
of the normalized characteristic function $\chi_{\bO}$
of $\bO$. We also note that, without restriction of
generality, we may assume that the ground state
wave function $g_1$ is positive.
Putting $\lambda(n) = \lambda$ for a moment 
and keeping $n$ fixed, we have 
\begin{align}
  \omega_{n, \lambda}(N_{\cR}(\bO)) & =
  \omega_{n, \lambda}(N(\bO)) - \omega_{n, \lambda}(N_{\cS}(\bO)) \nonumber \\
& = n \lambda^s \, \Big(  \int_{\bO} \! d\bx \, g_1(\lambda \bx)^2
- 
\Big( \int_{\bO} \! d\bx \, g_1(\lambda \bx) \, \chi_{\bO}(\bx)
\Big)^2 \, \Big)   \, . 
\end{align}
The expression in the second line is non-negative,
analytic for sufficiently small $\lambda$, and it is different from $0$.
It is also apparent that the two terms in the bracket
cancel each other at $\lambda = 0$. Thus there is
some number $l \in \NN$ and some constant
$c(\bO)  >  0$ such that
for small $\lambda$ one has in leading order 
$ \omega_{n, \lambda}(N_{\cR}(\bO)) \approx c(\bO) \, n \lambda^{s + l}$. 

\medskip  
It follows that for scalings
$n \mapsto \lambda(n) \approx n^{-1/\kappa}$ with $\kappa < s + l$ 
all excitations in $\cR(\bO)$ disappear in the thermodynamic 
limit, only the states in
$\cS(\bO)$ are 
occupied, eventually. If $\kappa  > s + l$, the number of excitations in
$\cR(\bO)$ tends to infinity, indicating the
appearance of further modes which contribute to the
proper condensate. Of particular interest are the 
scalings $n \mapsto \lambda(n) = \sigma \, n^{-1/(s + l)}$. 
There one arrives at
\be
\lim_n \, \omega_{n, \lambda(n)} (N_{\cR}(\bO)) = c(\bO) \, \sigma^{s + l} \, .
\ee
Thus, for these scalings, the particles with wave functions in $\cR(\bO)$
have a critical local density. So the expectation values of 
the number of condensate particles 
$n \mapsto \omega_{n, \lambda(n)}(N_{\cS}(\bO))$ with wave functions
in $\cS(\bO)$ become dominant if this value is reached,
akin to the appearance of a phase transition. 
So the points illustrated here
show that the concept of proper condensates leads to
a refined understanding of the onset of condensation. 

\vspace*{-4mm}
\section{Macroscopic occupation and proper condensation}
\setcounter{equation}{0}  

\noindent
We compare now our notion of proper condensates 
with the concept of macroscopic occupation numbers, 
proposed by Onsager and Penrose \cite{OnPe}, which is 
frequently used in the literature. It also deals with
sequences of $n$-particle states $\omega_{n}$, where one determines
for each $n$ the dominant occupation number of single
particle states, given by the maximal 
eigenvalue of the corresponding 
one-particle density matrices, $n \in \NN$.
In order to distinguish this concept
from our notions, we introduce the following terminology.

\medskip \noindent
\textbf{Definition:} \ Let $\omega_{n}$ be a sequence of
$n$-particle states, $n \in \NN$. This sequence describes 
\textit{growing condensates} if there exist normalized functions  
$f_n \in L^2(\RR^s)$, $n \in \NN$, and some constant $0 < \delta \leq 1$ 
such that
\be \label{e.10}
\limsup_n \, (1/n) \, \omega_{n}(a^*(f_n)a(f_n)) \geq \delta \, .
\ee
According to standard terminology, the corresponding single
particle states are macroscopically occupied in this case. 

\medskip
Being based in a clear-cut manner on the convenient notion
of one-particle density
matrices, this characterization of growing condensates
has found numerous applications
in the analysis of models. Yet, in spite of these
successes, it is not fully
satisfactory in some respects. First, since the
functions $f_n$ vary with~$n$, it does not give an
answer to the question whether the condensates can
be described in the limit of large $n$
by specific wave functions or, more generally, specific improper states. 
Second, relation \eqref{e.10} does not characterize by itself the
condensate functions $f_n$. As a matter of fact, adding to
$f_n$ any function $h$ for which
$\limsup_n \omega_n(a^*(h) a(h)) < \infty$, 
the resulting functions still comply with this
relation. Since it will not always be possible to
determine precisely the eigenfunctions of the
one-particle density matrices, the structure of the
condensates in the limit remains to be even more
obscure. Third, one might ask whether
the dependence of the expectation values
in relation \eqref{e.10}
on the particle number $n$ is really crucial. Taking into account that
the universe contains about $10^{80}$ atoms, it 
may still seem meaningful to speak of macroscopic occupation if one
replaces the prefactor $(1/n)$ by
$(\ln(n)/n)$, say. It turns out, however, that one then opens
Pandora's box. As is shown in the appendix, the
set of functions $f$, satisfying such mildly weaker
conditions in a sequence of states exhibiting
growing condensates, is huge (it is of second category). In a rough
analogy: it is as big as the set of non-rational real numbers
compared to the number of rationals. 

\medskip
In view of the results obtained in the preceding sections, 
it is apparent what is missing in order to solve these conceptual 
problems: one must determine the wave functions which are
regular in the limit. Restricting our attention to the
one-particle density matrices, we are led to the following
definition.

\medskip \noindent 
\textbf{Definition:} Let $\omega_n$ be a sequence of $n$-particle
states, $n \in \NN$. A function \mbox{$f \in L^2(\RR^s)$} is said to
be one-particle regular if
\be
\limsup_n \, \omega_n(a^*(f) a(f)) < \infty \, .
\ee
The complex subspace of one-particle regular functions is denoted
by $\underline{\cR}(\RR^s)$.  

\medskip
Since the number operators, counting particles with given wave
function, are unbounded, the space of one-particle regular functions
is in general smaller than the full space of regular functions.
This is shown in the
subsequent lemma. What is of more interest is the observation that if
the space $\underline{\cR}(\RR^s)$ is closed and has a
finite dimensional orthogonal complement, there
exist functions $f$, not depending on the particle
number, for which condition \eqref{e.10} is satisfied. 
\begin{lemma}
  Let $\omega_n$ be a sequence of $n$-particle states, $n \in \NN$,
  and let $\underline{\cR}(\RR^s)$ be the corresponding one-particle
  regular functions.
  \begin{itemize}
  \item[(i)]  $\underline{\cR}(\RR^s) \subset \cR(\RR^s)$, where
    ${\cR}(\RR^s)$ is the space of regular functions, 
    defined in the Section 2. 
  \item[(ii)]  Let the sequence of states exhibit growing
    condensates and let the corresponding space $\underline{\cR}(\RR^s)$
    be closed and have a finite dimensional orthogonal complement. There is
    a function $f$ in this complement, which does not
    depend on the particle number, for which relation
    \eqref{e.10} is satisfied with some lower bound
    $\delta^\prime > 0$.
  \end{itemize}  
\end{lemma}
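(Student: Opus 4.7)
For part~(i), my plan is a short resolvent estimate.
If $f\in\underline{\cR}(\RR^s)$ then the finite limsup together with the
trivial bound $\omega_n(a^*(f)a(f))\le n\|f\|^2$ on $n$-particle states
forces $C\doteq\sup_n\omega_n(a^*(f)a(f))<\infty$. Setting
$N\doteq a^*(f)a(f)$ and using $\mu(\mu 1+N)^{-1}=1-N(\mu 1+N)^{-1}$ with
$0\le N(\mu 1+N)^{-1}\le N/\mu$, one gets
\begin{equation}
\bigl|\,\mu\,\omega_n((\mu 1+a^*(f)a(f))^{-1})-1\bigr|\le C/\mu\,,\qquad\mu>0\,,
\end{equation}
uniformly in $n$. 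Passing to any weak$^*$-limit point $\omega$ of
$\{\omega_n\}$ on the resolvent algebra (which exists by compactness of the
unit ball) and then sending $\mu\to\infty$ yields the defining regularity
condition, so $f\in\cR(\RR^s)$.

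For part~(ii), I would begin by introducing the seminorms
$\phi_n(f)\doteq\omega_n(a^*(f)a(f))^{1/2}$ on $L^2(\RR^s)$. The triangle
inequality $\phi_n(f+g)\le\phi_n(f)+\phi_n(g)$ follows from the state
Cauchy--Schwarz bound $|\omega_n(a^*(f)a(g))|\le\phi_n(f)\phi_n(g)$, and
each $\phi_n$ is $L^2$-continuous via $\phi_n(f)\le n^{1/2}\|f\|$. Since
$\underline{\cR}(\RR^s)$ is assumed closed, hence Banach, and since
$\sup_n\phi_n(f)<\infty$ for each $f$ in it by the definition of
one-particle regularity, Banach--Steinhaus yields a constant $K$ with
$\phi_n(f)\le K\|f\|$ uniformly in $n$ and $f\in\underline{\cR}(\RR^s)$.

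Next, I would decompose $f_n=f_n^R+f_n^S$ with
$f_n^R\in\underline{\cR}(\RR^s)$,
$f_n^S\in\underline{\cS}(\RR^s)\doteq\underline{\cR}(\RR^s)^{\perp}$, and
$\|f_n^R\|,\|f_n^S\|\le 1$. Along a subsequence realising the limsup
in~\eqref{e.10}, the reverse triangle inequality combined with the uniform
Banach--Steinhaus bound gives
$\phi_n(f_n^S)\ge\phi_n(f_n)-K\ge(\delta n)^{1/2}(1-o(1))-K$, so
$n^{-1}\phi_n(f_n^S)^2\ge\delta/2$ eventually. Normalising
$\hat f_n^S\doteq f_n^S/\|f_n^S\|$ only improves this bound, and since
$\underline{\cS}(\RR^s)$ is finite-dimensional, a further subsequence of the
unit vectors $\hat f_n^S$ converges in norm to some unit
$f^*\in\underline{\cS}(\RR^s)$. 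A final triangle inequality together with
$\phi_n(h)\le n^{1/2}\|h\|$ yields
\begin{equation}
\phi_n(f^*)\ge\phi_n(\hat f_n^S)-\phi_n(\hat f_n^S-f^*)\ge n^{1/2}\bigl((\delta/2)^{1/2}-\|\hat f_n^S-f^*\|\bigr)\,,
\end{equation}
whose right-hand side is at least $n^{1/2}(\delta/8)^{1/2}$ for $n$ large in
the subsequence, giving the claim with $\delta'\doteq\delta/8$.

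The main obstacle is separating the condensate contribution from the
$n$-dependent sequence $f_n$: the two cascaded triangle estimates---first
peeling off the regular part of $f_n$, then replacing the normalised
singular part by a fixed limit function---each threaten to collapse the
lower bound. The uniform Banach--Steinhaus bound on $\underline{\cR}(\RR^s)$
is precisely what controls the first estimate, while the
finite-dimensionality of $\underline{\cS}(\RR^s)$ provides norm-compactness
of the unit sphere and thus controls the second. If either hypothesis were
dropped, one would need a substantially different argument to produce a
fixed witness~$f$.
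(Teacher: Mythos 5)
Your proof is correct. Part~(i) is essentially the paper's own argument: the paper bounds $\omega_n(1-(1+\varepsilon a^*(f)a(f))^{-1})\le\varepsilon\,\omega_n(a^*(f)a(f))$ and passes to a limit point, which is your estimate with $\varepsilon=1/\mu$. Part~(ii) follows the same skeleton as the paper --- uniform boundedness on the closed subspace $\underline{\cR}(\RR^s)$ to get the constant $K$ (the paper's $c_{\underline{\cR}}$), then a triangle/Cauchy--Schwarz estimate showing the component of $f_n$ in the finite-dimensional orthogonal complement still carries a macroscopic fraction --- but the two arguments diverge at the final step of producing a \emph{fixed} witness. The paper invokes finite-dimensionality via what is in effect a pigeonhole over the $d$-dimensional complement, landing on a normalized $f$ with $\limsup_n(1/n)\,\omega_n(a^*(f)a(f))\ge\delta/d$; you instead use finite-dimensionality for norm-compactness of the unit sphere, extract a convergent subsequence of the normalized singular components $\hat f_n^S$, and transfer the bound to the limit vector $f^*$ by one more triangle inequality. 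Your route buys a dimension-independent constant $\delta'=\delta/8$ (versus the paper's $\delta/d$, which degrades as $d$ grows), at the cost of an extra subsequence extraction; both hinge on the same two hypotheses (closedness for Banach--Steinhaus, finite dimension of the complement), and your closing remark correctly identifies where each is used.
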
  
\begin{proof}
  (i)  The first statement is a  consequence of the simple estimate
  \begin{align} 
  \omega_n\big( (1 - (1 + \varepsilon a^*(f) a(f))^{-1})\big) & =
  \varepsilon \,
  \omega_n\big( a^*(f)a(f) (1 + \varepsilon a^*(f) a(f))^{-1} \big)
    \nonumber \\ 
  & \leq  \varepsilon \, \omega_n \big( a^*(f)a(f) \big) \, .
  \end{align}
  Thus if $\omega_\infty$ is any limit point of the given sequence of states,
  one has
  \be
  \omega_\infty\big( (1 - (1 + \varepsilon a^*(f) a(f))^{-1})\big)
  \leq \varepsilon \, \limsup_n \omega_n\big( a^*(f) a(f) \big) 
= \varepsilon \, c_f \, , \
  f \in \underline{\cR}(\RR^s) \, . 
  \ee
  So the resolvents of the particle number operators
  assigned to functions \mbox{$f \in \underline{\cR}(\RR^s)$} converge
  in the limit states
  to the unit operator $1$ in the limit of small $\varepsilon$.

  \medskip \noindent 
  (ii) $\,$ If the space $ \underline{\cR}(\RR^s)$ is a closed
  subspace of $L^2(\RR^s)$, there exists by the uniform
  boundedness principle \cite{Yo} some constant
  $c_{\underline{\cR}}$ such that
  \be
  \omega_n(a^*(f) a(f)) \leq c_{\underline{\cR}} \, \| f \|^2 \, , \quad
  f \in  \underline{\cR}(\RR^s) \, , \ n \in \NN \, . 
  \ee
  Let $P_{\underline{\cR}}$ be the projection onto $\underline{\cR}(\RR^s)$
  and let $f_n$, $n \in \NN$, be a sequence of functions complying
  with condition \eqref{e.10}. One then obtains straightforwardly 
  \be
  \omega_n\big(a^*((1 - P_{\underline{\cR}})f_n)
  a((1 - P_{\underline{\cR}})f_n) \big)  
  \geq 
  \omega_n(a^*(f_n) a(f_n)) -
  2 ( c_{\underline{\cR}} n )^{1/2} \, ,
  \ee
  hence $\limsup_n (1/n) \, \omega_n\big(a^*((1 - P_{\underline{\cR}})f_n)
  a((1 - P_{\underline{\cR}})f_n) \big)  \geq \delta $. 
  Since the projection $(1 - P_{\underline{\cR}})$ has by assumption
  finite dimension $d$, it follows that there exists in the compact unit ball
  of the corresponding subspace some normalized function $f$ such that
  \be
\limsup_n  (1/n) \, \omega_n(a^*(f) a(f)) \geq \delta/d \, ,
  \ee
  completing the proof. \end{proof}  

\medskip
As was discussed in the preceding sections, the spaces 
$\underline{\cR}(\RR^s)$ may not be expected to be closed
in general if one proceeds to the thermodynamic limit, since
then the emerging condensates are to be described by
distributions. The method to avoid this problem is to restrict
attention to local subspaces $L^2(\bO) \subset L^2(\RR^s)$. Proceeding
from the obvious characterization of sequences of
states describing locally some growing condensate,
one may expect that the resulting local regular spaces
$\underline{\cR}(\bO)$ are closed in cases of interest and also have 
finite dimensional orthogonal complements. The preceding
results then apply accordingly. We dispense with
a discussion of the obvious details. 

\medskip
In our final result, we establish a relation between the
notions of growing condensates and of proper condensates in those
cases, where one can find a fixed function $f$ for which
condition~\eqref{e.10} is satisfied. 

\begin{proposition} \label{l.4}
 Let $\omega_n$ be a sequence of $n$-particle states
 on the algebra of
 observables $\fA(\RR^s)$ for which condition~\eqref{e.10} is satisfied for
 a fixed normalized function $f \in L^2(\RR^s)$, not
 depending on $n \in \NN$. 
 There exist limit points $\omega_{\infty}$ of this sequence,
 containing in their central decomposition a
 non-negligible set of primary components with the property that 
 all $C_0$-functions of
 $a^*(f)a(f)$ vanish in the corresponding GNS representation. In
 other words, they contain a proper condensate. 
  \end{proposition}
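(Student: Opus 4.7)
The plan is to apply Proposition~\ref{p.2.1} to a weak-* cluster point of $\{\omega_n\}$. Having identified a limit point $\omega_\infty$ satisfying $\limsup_{\mu\to\infty} \mu\,\omega_\infty((\mu 1 + a^*(f)a(f))^{-1}) < 1$, the conclusion about the non-negligible set of primary components carrying a proper condensate in $f$ will follow directly.

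The key estimate I would use is an elementary operator inequality on the $n$-particle subspace. Writing $N \doteq a^*(f)a(f)$, one has $N \leq n\,1$ on that subspace, so $\mu 1 + N \leq (\mu+n)\,1$, hence $N(\mu 1 + N)^{-1} \geq N/(\mu+n)$ as positive operators. Combined with $\mu(\mu 1+N)^{-1} = 1 - N(\mu 1+N)^{-1}$, evaluating in $\omega_n$ yields $\mu\,\omega_n((\mu 1 + N)^{-1}) \leq 1 - \omega_n(N)/(\mu+n)$. Since the factor $n/(\mu+n) \to 1$ as $n \to \infty$ for fixed $\mu$, the hypothesis $\limsup_n (1/n)\omega_n(N) \geq \delta$ already gives $\limsup_n \mu\,\omega_n((\mu 1 + N)^{-1}) \leq 1 - \delta$ for every $\mu > 0$.

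To produce $\omega_\infty$, I would proceed in two steps: first extract a subsequence $n_k$ along which $(1/n_k)\omega_{n_k}(N) \to d$ for some $d \in [\delta,1]$ (possible by definition of $\limsup$, and since $\omega_n(N) \leq n$ on the $n$-particle space), and then, by weak-* compactness of the state space of $\fA(\RR^s)$ (Banach-Alaoglu), pass to a further subsequence along which $\omega_{n_k}$ converges weak-* to some state $\omega_\infty$. Because the resolvents $(\mu 1 + N)^{-1}$ lie in $\fA(\RR^s)$, as emphasized in the introduction, the bound transfers to the limit, giving $\mu\,\omega_\infty((\mu 1 + N)^{-1}) \leq 1 - \delta < 1$ for every $\mu > 0$. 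Proposition~\ref{p.2.1} then supplies the central projection $Z$ of positive weight $\omega_\infty(Z)$ on the primary components that annihilate every $C_0$-function of $a^*(f)a(f)$.

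I do not expect a substantive obstacle. The only point deserving minor attention is the joint extraction of a subsequence along which both the normalized first moment and the expectation values of the resolvents converge; this is handled by the two-step procedure above, or equivalently by a diagonal extraction over a countable dense set of values of $\mu > 0$. Conceptually the argument simply converts the macroscopic first-moment bound $(1/n)\omega_n(N) \geq \delta$ into concentration of the spectral mass of the bounded observable $N(\mu 1 + N)^{-1}$ near $1$ in the limit, which is precisely the phenomenon encoded by Proposition~\ref{p.2.1}.
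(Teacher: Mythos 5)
Your proposal is correct and follows the same overall architecture as the paper's proof (extract a subsequence realizing the $\limsup$, pass to a weak-$*$ limit point, verify the hypothesis \eqref{e.2} of Proposition~\ref{p.2.1}), but the key quantitative step is done differently. The paper spectrally decomposes $a^*(f)a(f)\upharpoonright\cF_n=\sum_k k\,E_{n,k}$ and runs a Chebyshev-type splitting of the sum at $[\delta n_l/2]$ to show that a spectral mass of at least $\delta/2$ sits above $[\delta n_l/2]$, whence $\liminf_l\omega_{n_l}(D_\varepsilon)\geq\delta/2$ and $\limsup_{\mu\to\infty}\mu\,\omega_\infty((\mu 1+N)^{-1})\leq 1-\delta/2$. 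You instead use the functional-calculus inequality $N(\mu 1+N)^{-1}\geq N/(\mu+n)$ on $\cF_n$ (valid since $x/(\mu+x)\geq x/(\mu+n)$ for $x$ in the spectrum $\{0,\dots,n\}$), which converts the first-moment hypothesis directly into $\mu\,\omega_n((\mu 1+N)^{-1})\leq 1-\tfrac{n}{\mu+n}\cdot\tfrac{1}{n}\omega_n(N)$. This is more elementary, avoids the spectral splitting entirely, and yields the sharper constant $1-\delta$ in place of $1-\delta/2$. One small imprecision: your intermediate claim that the hypothesis ``already gives $\limsup_n\mu\,\omega_n((\mu 1+N)^{-1})\leq 1-\delta$'' is not valid for the full sequence, since $\limsup_n(1/n)\omega_n(N)\geq\delta$ controls only a $\liminf$ of the right-hand side; but your own two-step subsequence extraction (first making $(1/n_k)\omega_{n_k}(N)$ converge to some $d\geq\delta$, then invoking weak-$*$ compactness) repairs this, so there is no genuine gap.
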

\begin{proof} We make use of the 
  resolvents $A_\varepsilon = (1 + \varepsilon a^*(f) a(f))^{-1}$,
  $\varepsilon > 0$, considered in Proposition~\ref{p.2.1}, and show that
  the states have weak-$*$ limit points in which the
  expectation values of these resolvents comply with  
  relation \eqref{e.2}. The statement then follows from that
  proposition.

  \medskip
  Restricting the number 
  operator $a^*(f)a(f)$ to the $n$-particle space $\cF_n$, it can be
  spectrally decomposed into 
\be
  a^*(f)a(f) \upharpoonright \cF_n =
  \sum_{k = 0}^n k \, E_{n,k} \, , \quad
  n \in \NN \, .
\ee
  By assumption there
  exists a subsequence $\omega_{n_l}$ such that,
  disregarding corrections which
  vanish in the limit of large $n_l$, 
  \be
  \sum_{k = 0}^{n_l} k \, \omega_{n_l}(E_{n_l,k})
  \geq \delta \, n_l \, ,  \quad l \in \NN \, .
  \ee
  Denoting by $[x] \in \NN_0$ the largest number which is smaller
  or equal to $x > 0$, we have
  \begin{align}
  \sum_{k = 0}^{n_l} k \, \omega_{n_l}(E_{n_l,k}) &
  \leq \sum_{k = 0}^{[\delta n_l/2] } k \, \omega_{n_l}(E_{n_l,k})
  + \sum_{k = [\delta n_l/2]}^{n_l} k \, \omega_{n_l}(E_{n_l,k}) \nonumber \\
  & \leq [\delta n_l/2] +
  n_l \! \! \! \sum_{k = [\delta n_l/2]}^{n_l} \omega_{n_l}(E_{n_l,k})
  \, .
  \end{align}
  It implies 
  \be
  \sum_{k = [\delta n_l/2]}^{n_l} \omega_{n_l}(E_{n_l,k}) \geq \delta  /2 \, , 
  \quad l \in \NN \, . 
  \ee
  Now let $\varepsilon > 0$ and let
  $D_\varepsilon \doteq  (1 - A_\varepsilon) 
  =  \varepsilon a^*(f) a(f) (1 +  \varepsilon a^*(f) a(f))^{-1} $.
  Then
  \begin{align}
  \omega_{n_l}(D_\varepsilon) & =
  \sum_{k = 0}^{n_l} \big(\varepsilon k / (1 +  \varepsilon k) \big) \, 
  \omega_{n_l}(E_{n_l,k}) 
  \geq \sum_{k = [\delta n_l/2]}^{n_l} \big(\varepsilon k / (1 +  \varepsilon k)
  \big) \, \omega_{n_l}(E_{n_l,k})  \nonumber \\
  & \geq \mbox{\Large
    $ \frac{\varepsilon \, [\delta  n_l/2]}{1 +
        \varepsilon \, [\delta n_l/2]} $} 
  \sum_{k = [\delta n_l/2]}^{n_l}  \! \! \omega_{n_l}(E_{n_l,k}) \ 
  \geq \ \mbox{\Large
     $ \frac{\delta \, \varepsilon \, [c n_l/2]}{2 \, (1 +
      \varepsilon \, [\delta n_l/2])} $}  \, .
  \end{align}
  It follows that
  $\big(1 - \limsup_l \omega_{n_l}(A_\varepsilon)\big)
    = \liminf_l \, \omega_{n_l}(D_\varepsilon) \geq \delta/2 > 0$,
  independently of the value of $\varepsilon > 0$. 
  Thus if $\omega_{\infty}$ is any weak-$*$ limit point on
  $\fA(\RR^s)$ of the sequence $\omega_{n_l}$, $l \in \NN$,
  one has \ $\limsup_{\varepsilon \searrow 0} \, 
  \omega_{\infty}(A_\varepsilon) \leq (1 - \delta/2) < 1$,
  completing the proof.
\end{proof}  
Thus the concepts introduced in the
present investigation also add to the understanding of
the properties of growing condensates in the limit of
large particle numbers. 

\section{Conclusions}
\setcounter{equation}{0}

\noindent
In the present investigation we have established concepts which allow it
to discuss the formation of condensates in an unambiguous manner
in states with a finite particle number. The basic idea is to 
proceed to the theoretical limit of infinite particle numbers and to 
determine the subspaces of wave functions that describe regular, finitely 
occupied excitations. This step is necessary in order to
clearly identify the wave functions of the proper condensates 
within the maze of infinitely occupied states which
have some overlap with them.
In those cases where the spaces of regular functions are closed and have an
orthogonal complement, the wave functions in this complement 
describe the proper condensates, consisting of infinitely many 
particles in the limit. In view of this feature we 
refer to those functions as singular. 
The notion of proper condensate  has the status of a superselection
rule. As we have seen, the presence of
proper condensates can be established by central
sequences of observables. 

\medskip
If the regular spaces are not closed,
it is an indication that the condensates in the limit
states have to be described by improper states. 
The regular spaces are then dual to these 
distributions and serve as test functions. 
One  can frequently bypass this feature by proceeding to
a local point of view. Restricting the limit states to local
observables, the improper states become normalizable, 
resulting in singular local subspaces of wave functions
in the orthogonal complement of the regular ones. 

\medskip
Having identified the relevant regular spaces and
their singular complements, one can revert to the states of
primary interest, having a finite
particle number, and base their analysis on these notions. The
algebras of observables which are sensitive to the condensate
are assigned to the spaces of singular functions, which are
finite dimensional whenever the proper condensates occupy only 
a limited number of modes. 
Their restrictions to states with a finite particle
number are then isomorphic to matrix algebras, allowing for 
a convenient analysis and manipulation of the condensate fraction. In
particular, the number of particles in the condensate can
be changed arbitrarily in these cases without
changing the total number of particles in the
system or affecting the regular observables. Moreover, one can
characterize the
onset of proper condensation in given sequences of
states by computing the expectation
values of the (local) particle number operators for the singular
excitations and comparing them with 
those of the full (local) particle number operators.

\medskip
We did not discuss here the possibility that the orthogonal complements 
of the regular functions have infinite dimension. Indeed this is
expected to happen
in interacting systems if one squeezes the particles into narrow trapping
potentials. Then also highly excited states become infinitely 
occupied. In our opinion these cases are only of secondary interest,
similarly to the case of infinite temperatures, where all levels are
infinitely occupied and no regular functions survive. We expect that
these less transparent forms of condensation can be avoided by
choosing appropriate sequences of trapping potentials, where
one may be able to establish the onset of proper condensation into a
limited number of states.  

\medskip
The notion of proper condensates has already appeared in an analysis of
non-interacting systems in arbitrary trapping potentials given in 
\cite{BaBu}. There the idea of focusing on subspaces of
regular excitations was systematically pursued
in order to identify the singular wave functions in their complement, 
describing proper condensates. In the present
investigation, we have seen that this idea is meaningful more
generally. We expect that it will also be useful in interacting
theories, where one has less \textit{a priori} information
about possible candidates for the condensate functions. 
As we have seen, the direct approach for the determination of 
functions describing proper condensates
is difficult. Whenever there are growing
condensates in a sequence of states,
there exists a maze of other functions (to be precise: of second category)
which are also infinitely occupied in the limit of large particle
numbers. In contrast, the set of regular functions
tends to become smaller in this limit. As a matter of fact,
if proper condensates appear, the regular functions are generically
meager (of first category), as is obvious if they have
a non-trivial orthogonal complement. This feature will
hopefully be useful for alternative 
existence proofs of proper condensates in
the presence of interaction. 

\vspace*{-3mm}
\section*{dedication}

\vspace*{-5mm}
\noindent This article is dedicated
to Helmut Reeh on the occasion of his 90th birthday. 

\vspace*{-3mm}
\begin{acknowledgments}

\vspace*{-5mm}
\noindent 
I would like to thank Jakob Yngvason for an intense exchange on the topic
of Bose-Einstein condensation and his numerous objections and 
critical comments on previous versions of this article. It helped me to
clear up my thoughts, even though
our views on this matter do not yet conform. I am
also grateful to Dorothea Bahns and the Mathematics Institute
of the University of G\"ottingen for their generous hospitality.  
\end{acknowledgments}

\appendix

\section*{Appendix}
\setcounter{equation}{0}

\vspace*{-3mm}
\noindent
We discuss in this appendix some aspects of the direct approach to the 
determination of condensate wave functions, based on the analysis of
one-particle density matrices. As we will see, the concept of
\textit{growing condensates} is quite subtle and depends crucially
on the fact that one deals with the exact eigenfunctions of these
matrices, assigned to their largest eigenvalues. If one
slightly relaxes this condition and only requires that there are
one-particle functions which are (almost) macroscopically occupied,
there exists an abundance of examples. So this approach does not
allow to determine in a precise manner the wave functions of
particles in a proper condensate. This observation
suggests that concentrating on their complement, the regular
functions, may also be meaningful there. 

\medskip
Turning to the analysis, let $\cH_2(\cF)$ be
the space of Hilbert-Schmidt operators on
the standard bosonic Fock space $\cF$ and let $\omega_n$
be a sequence of $n$-particle states, described by
density matrices $\rho_{n}$
on the subspaces $\cF_n$, $n \in \NN$. 
We consider the (real) linear maps
$T_n : L^2(\RR^s) \rightarrow \cH_2(\cF)$, $n \in \NN$,  given by
\be \label{e.4}
T_n(f) \doteq a(f) \, \rho_{n}^{1/2} \, , \quad f \in L^2(\RR^s) \, . \tag{A.1}
\ee
Clearly,
\be
\| T_n(g) \|_{\cH_2(\cF)}^2 = \omega_{n}(a^*(g) a(g))
\leq n \, \| g \|^2 \, ,  \quad g \in L^2(\RR^s) \, , \tag{A.2}
\ee
so these maps are bounded. The following result is then
an immediate consequence of a theorem by Banach \cite[II.4]{Yo}. 

\medskip \noindent
\textit{Observation 1:} The linear space of functions
$\underline{\cR}(\RR^s)  \subset L^2(\RR^s)$ satisfying
\be
\limsup_n \, \| T_n(f) \|_{\cH_2(\cF)} < \infty \, , \tag{A.3}
\quad f \in \underline{\cR}(\RR^s)  \, ,
\ee 
either coincides with all of $L^2(\RR^s)$, or it is meager
(a countable union of nowhere dense sets) in $L^2(\RR^s)$. 

\medskip 
In view of condition~\eqref{e.10},
the latter alternative obtains in the
presence of growing condensates. So the set of
one-particle regular functions is small. In order to
obtain some further insights into the structure of their
singular complement, let us slightly relax condition
\eqref{e.10} by replacing the factor $(1/n)$ 
by $(\chi(n)/n)$, where~$\chi$ is a function tending
arbitrarily slowly to $\infty$ in the limit of large
$n$. Functions $f \in L^2(\RR^s)$ for which the resulting condition
is satisfied are said to be almost macroscopically occupied. 
As a matter of fact, one can arbitrarily choose a countable
number of functions $\chi_m$, $m \in \NN$, with
decrescent increase and there still exists an abundance
of functions $f \in L^2(\RR^s)$ complying with the resulting conditions.
This observation is based on a result of Banach-Steinhaus
(condensation of singularities \mbox{\cite[II.4]{Yo}}).

\medskip \noindent
\textit{Observation 2:} Let 
$\omega_n$, $n \in \NN$, be a sequence of states,
exhibiting growing condensates, and let 
$\chi_m$, $m \in \NN$, be a sequence of
functions which are arbitrarily slowly tending 
to $\infty$. There exists a subset of functions $\underline{\cS}(\RR^s)$ 
of second category in the complement of the
one-particle regular functions $\underline{\cR}(\RR^s)$
  such that
  \be 
  \limsup_n \, (\chi_m(n)/n) \, \omega_n(a^*(f)a(f)) = \infty  \, ,
  \quad f \in  \underline{\cS}(\RR^s) \, , \ m \in \NN \, . \tag{A.4}
  \ee  

  \medskip
  We briefly sketch the proof of this statement. 
    Let $T_n$, $n \in \NN$, be the maps defined in \eqref{e.4}.
  We then consider the double sequence
  $T_{m,n} \doteq (\chi_m(n)/n)^{1/2} \, T_n \,$ for $m,n \in \NN$. 
  According to the preceding argument, these maps are bounded,
  $\| T_{m,n} \| \leq \chi_m(n)^{1/2} $.  
  Since the given sequence of states contains growing condensates, 
  there exist functions $f_n$, $n \in \NN$, such that
  for some subsequence of states  
  \be
  \omega_{n_k}(a^*(f_{n_k}) a(f_{n_k})) \geq (\delta/2) \, n_k \, ,
  \quad k \in \NN \, . \tag{A.5}
  \ee
  This implies, for any given $m \in \NN$,  that \ 
  $\limsup_k \| T_{m, n_k}(f_{n_k}) \|_{\cH_2(\cF)} = \infty$
  in view of the divergence of $\chi_m(n)$ in the limit
  of large $n$.
  It follows that the set of functions $M_m$, $m \in \NN$, 
  for which $\limsup_k \| T_{m, n_k}(g) \|_{\cH_2(\cF)} < \infty$,
  $g \in M_m$, is a meager set. Thus the countable union 
  $\cup_m M_m \subset L^2(\RR^s)$ is also meager,
  so its complement $\underline{\cS}(\RR^s)$ is of second category.
  
\bigskip \noindent
\textbf{\large Data availability} \\[1mm]  
Data sharing is not applicable to this article as no new data were
created or analyzed in this study.

\medskip


\begin{thebibliography}{99}
  {\small

\bibitem{BaBu}  D.\ Bahns, D.\ Buchholz,
  `Trapped bosons, thermodynamic limit and condensation:
  a study in the framework of resolvent algebras'', J.\ Math.\ Phys.\
  \textbf{62} (2021) 041903

\bibitem{BeSh} F.A.\ Berezin,  M.A.\ Shubin,
    \textit{The Schr\"odinger Equation},
    Kluwer (1991)
  
 \bibitem{Bu} D.\ Buchholz, ``The Resolvent Algebra of Non-relativistic
  Bose Fields: Observables, Dynamics and States'',
  Commun.\ Math.\ Phys.\ \textbf{362} (2018) 949-981

\bibitem{BuGr} D.\ Buchholz, H.\ Grundling,
  ``The resolvent algebra: a new approach to canonical quantum systems'', 
J.\ Funct.\ Analysis \textbf{254} (2008) 2725-2779

\bibitem{Ha} R.\ Haag,
  \textit{Local quantum physics: Fields, particles, algebras}, 
  Springer (1992)
  
\bibitem{LiSeSoYn} E.H.\ Lieb, R.\ Seiringer, J.P.\ Solovej, 
J.\ Yngvason, \textit{The Mathematics of
  the Bose Gas and its Condensation}, Birkhäuser (2005)

\bibitem{OnPe} L.\ Onsager, O.\ Penrose, 
``Bose-Einstein Condensation and Liquid Helium'',
Phys.\ Rev.\ \textbf{104} (1956) 576–584 

 \bibitem{Ve} A.F.~Verbeure, \textit{Many-Body Boson Systems. Half a
     Century Later},  Springer (2011)

\bibitem{Yo} K.\ Yoshida, \textit{Functional Analysis},
Springer (1965)
}
\end{thebibliography}
\end{document}